\documentclass[11pt,a4paper]{article}

\usepackage{authblk}
\usepackage{eurosym}
\usepackage{tabto}
\usepackage{enumerate}
\usepackage{amsmath}
\usepackage{enumitem}
\usepackage{amsfonts}
\usepackage{amssymb}
\usepackage{graphicx}
\usepackage{color}
\usepackage[pdfpagelabels]{hyperref}

\definecolor{mycolorpink}{rgb}{1, 0, 1}

\def\C{\mathcal{C}}

\usepackage{amsmath,amssymb,amsthm}
\usepackage{mathtools}
\newtheorem{theorem}{Theorem}[section]

\newtheorem{proposition}[theorem]{Proposition}

\newtheorem{remark}[theorem]{Remark}

\usepackage{graphicx,caption,subcaption,float}
\usepackage{booktabs}
\usepackage{listings}
\usepackage{float}
\usepackage{subcaption}

\usepackage[backend=bibtex,style=alphabetic]{biblatex}
\DeclareMathOperator*{\argmin}{argmin}

\def\R{\mathbb{R}}
\def\E{\mathbb{E}}

\begin{document}

\numberwithin{equation}{section}

\title{\textbf{Pricing options on illiquid assets \\ using liquid market benchmarks: \\ an application to energy markets}}
\author[1]{F. Aluigi\thanks{\texttt{federico.aluigi@enel.com}}}
\author[2,4]{L. Caramellino\thanks{\texttt{caramell@mat.uniroma2.it}}}
\author[3,4]{P. Pigato\thanks{\texttt{paolo.pigato@uniroma2.it}}}
\author[1,2]{E. Scrima\thanks{\texttt{edoardo.scrima3@enel.com}}}

\affil[1]{\small{Enel Global Energy and Commodity Management}}
\affil[2]{\small{Dipartimento di Matematica, Universit\`a di Roma Tor Vergata}}
\affil[3]{\small{Dipartimento di Economia e Finanza, Universit\`a di Roma Tor Vergata}}
\affil[4]{\small{INdAM-GNAMPA, Research Unit at Tor Vergata}}
\date{}
\maketitle
\begin{abstract}
\noindent
The Gasoil options market is illiquid, making it difficult to construct its implied volatility surface directly. However, it is closely linked to the highly liquid Brent options market.
In this paper, we jointly model Brent and Gasoil futures prices through a correlated Bachelier local volatility model: the Brent factor is described by a normal mixture diffusion model, while the Gasoil-Brent spot volatility spread is estimated using a data-driven procedure that identifies clusters of historical crack-spread levels and Gasoil-Brent volatility spreads.
The resulting bivariate model allows us to compute an implied volatility correction that maps Brent implied volatilities to Gasoil implied volatilities without using illiquid Gasoil option prices as inputs.
Monte Carlo simulations demonstrate that the resulting implied volatilities closely match observed Gasoil implied volatilities when benchmarked against more direct  approaches. These results suggest that the proposed framework is well suited for modeling refined products and pricing the corresponding financial derivatives.

\end{abstract}

\section{Introduction}
\label{introduction}

Constructing implied volatility surfaces is a central problem in derivatives pricing and risk management. In liquid markets, it is possible to infer a volatility surface directly from options market quotes, possibly after applying suitable interpolation, smoothing and no-arbitrage constraints \cite{gatheral2006volatility, Guo2016, gatheral2014arbitrage}. In illiquid markets, however, option quotes may be sparse, irregular or affected by wide bid--ask spreads. In these cases, direct calibration becomes unstable and model uncertainty becomes more relevant \cite{cont2006model}.

This issue is particularly important in energy markets. Crude oil benchmarks such as Brent typically benefit from more liquid futures and option markets, while refined products such as Gasoil may exhibit less reliable option quotes across strikes and maturities. As a consequence, constructing a Gasoil implied volatility surface from Gasoil option prices alone can be difficult, and it can be beneficial to use information from a related and more liquid market, such as Brent.

A natural economic link between crude oil and refined products is provided by the crack spread, usually interpreted as the price difference between a refined product and crude oil. Crack spreads are widely used as indicators of refining margins and play an important role in hedging and risk management in energy markets \cite{eia2011crack, haigh2002crack}. The literature on energy derivatives has studied spread options and crack spread options, highlighting the importance of modelling the joint behaviour of crude and refined product prices \cite{carmona2003spread,paschke2009pricing, mahringer2015crack}.  More generally, commodity and energy futures markets display specific features such as term-structure effects, time-varying volatility, seasonality and market incompleteness \cite{pindyck2001dynamics,benth2008stochastic}.

In practice, when an asset is illiquid but strongly related to a liquid proxy, market participants often transfer information from the liquid market to the illiquid one. From a theoretical perspective, this is connected to pricing in incomplete markets, where utility-based and proxy-hedging methods provide a possible framework for dealing with non-traded or illiquid risks \cite{henderson2009utility,callegaro2017utility,halperin2014pricing}. Our approach follows this proxy-based idea, but does not rely on utility indifference pricing. In fact, the Brent option market provides the reference volatility structure, while Gasoil futures prices enter through the Gasoil-Brent crack spread, which is used to construct a Gasoil-specific volatility correction.
We note that the transformation from Brent to Gasoil implied volatilities is not simply a level shift: skew and curvature are also affected. For instance, practitioners typically expect the right wing of Gasoil implied volatility smiles to be further away from the Brent one, with respect to the at-the-money point. An accurate model should be able to reproduce these features of implied skew and curvature. For general background on implied volatility shapes, skew and curvature, see \cite{gatheral2006volatility}.
\medskip

{\bf Our contribution.} In this paper we develop a framework for constructing the Gasoil volatility surface by exploiting the liquid Brent option market. We model Brent and Gasoil futures prices, denoted by \(B(\cdot,T)\) and \(G(\cdot,T)\), with a two-dimensional correlated extended Bachelier model. The use of a normal rather than log-normal specification is natural in commodity markets, where price changes are often modelled in absolute rather than proportional terms, and where negative prices cannot be ruled out in extreme market conditions. This choice is also consistent with the use of normal-type models in oil derivatives markets \cite{bouchouev2023virtual}.

For the Brent component, we adapt the mixture approach of Brigo, Mercurio and Sartorelli \cite{brigo2003alternativeasset,brigo2006interest}. In our setting, the Brent local volatility function \(\sigma^B(t,x)\) is chosen so that the marginal distribution of the futures price is represented by a mixture of Gaussian densities. This provides enough flexibility to reproduce volatility smiles while retaining a tractable structure of the dynamics.
See \cite{alexander2004normal} for applications of mixture diffusion models to volatility smiles.

The Gasoil volatility is then specified as a Brent-based volatility plus a Gasoil-specific correction:
\[
\sigma^G(t,T)
=
\sigma^B(t,B(t,T))
+
h\bigl(t,G(t,T)-B(t,T)\bigr).
\]
The function \(h\) depends on time and on the Gasoil--Brent crack spread. It is constructed through a data-driven clustering procedure that summarizes representative historical configurations of crack-spread levels and Gasoil--Brent volatility spreads.

Since the resulting Gasoil dynamics do not lead to closed-form option pricing formulas, we compute Gasoil option prices  by Monte Carlo simulation of the joint Brent--Gasoil system. 
We then obtain the corresponding Gasoil implied volatilities  by inverting the Bachelier pricing formula.

Finally, we link the reconstructed Gasoil surface to the observed Brent market surface. Denoting by \(\sigma^{B,\mathrm{mkt}}_{{\mathrm{IV}}}\) the Brent market-implied volatility surface and by \(\sigma^{G,\mathrm{mod}}_{{\mathrm{IV}}}\), \(\sigma^{B,\mathrm{mod}}_{{\mathrm{IV}}}\) the model-implied Gasoil and Brent volatility surfaces, we define
\[
\sigma^{G,\mathrm{final}}_{{\mathrm{IV}}}(K,T)
=
\sigma^{B,\mathrm{mkt}}_{{\mathrm{IV}}}(K,T)
+
\Bigl(
\sigma^{G,\mathrm{mod}}_{{\mathrm{IV}}}(K,T)
-
\sigma^{B,\mathrm{mod}}_{{\mathrm{IV}}}(K,T)
\Bigr).
\]
This construction preserves the level of the liquid Brent market surface and adds the Gasoil--Brent implied volatility spread generated by the model.

\smallskip

This paper makes therefore the following main contributions. 

\smallskip

First, it adapts the log-normal mixture framework of Brigo and Mercurio to Energy markets.

\smallskip

Secondly, it proposes a framework for constructing volatility surfaces in illiquid markets by consistently transferring volatility information from a liquid benchmark (Brent) to a related illiquid asset (Gasoil). Furthermore, it introduces a data-driven volatility-spread model based on the crack spread, with the function $h$ constructed from representative historical market configurations. Numerical simulations show that the framework reproduces key features of the Gasoil volatility surface. 

\smallskip

Finally, the model relies on Brent option quotes, Brent and Gasoil futures prices, and the historical crack-spread/volatility-spread configurations used to construct \(h\). This makes the framework suitable for situations in which Gasoil option data are less liquid or less reliable than Brent option data. The resulting model is suited for pricing and risk management applications, as well as for analysing relative value opportunities between crude oil and refined products.

We conclude observing that, at the validation dates, the Gasoil option surface is not used as an input to construct the model-implied surface. It is used only ex post as a benchmark to analyse the quality of the reconstruction.

The paper is organised as follows. Section~\ref{The model} first introduces the model and describes the data, the construction of the rolling futures series, and the function $h$. Section~\ref{Calibration} presents the calibration of the Brent volatility surface and the correlation parameter. Section~\ref{Gasoil option pricing} discusses the joint dynamics and the Monte Carlo pricing procedure. Section~\ref{sec:numerical_results}
 reports the numerical results.

\smallskip

\textbf{Acknowledgments.}
L.C. and E.S. acknowledge support from the MUR Excellence Project MatMod@TOV awarded to the Department of Mathematics, Tor Vergata University of Rome, CUP E83C23000330006; L.C and P.P. acknowledge support from the  the Research Project MLFGTSRL from Tor Vergata University of Rome, CUP E83C25000470005.

\section{The model}
\label{The model}

Our goal is to build a model for pricing and hedging options on Gasoil futures, whose market is closely related to Brent futures market, but relatively illiquid compared to it. We do so using options on Brent futures, which are widely traded on regulated exchanges and backed by full volatility surfaces, as a proxy.

We work with two different market instruments: 
\begin{itemize}
    \item \textbf{Futures contract.} 
    A futures contract is an agreement between two parties to buy or sell an asset at a future date at a price fixed today. In our case, the underlying asset is a commodity. We focus on fixed-delivery futures contracts, meaning that delivery takes place at a future date $T$, specified at the issuance of the contract.

    \item \textbf{Option on a futures contract.} 
    An option on a futures contract is a contract that gives its holder the right, but not the obligation, to buy (call option) or sell (put option) the underlying futures contract at a predetermined price $K$, called the strike price. The date at which this right can be exercised is called the option maturity, or expiration date, and is denoted by $\widetilde T$.
We will assume that the date is specified at issuance (European option).

    For the contracts considered in this work, the option maturity is close to the futures delivery date: the option expires approximately two working days before the futures maturity,
    $$
    \widetilde T \simeq T.
    $$
    For this reason, and to simplify the notation, we do not distinguish between the option maturity and the futures maturity in the rest of the analysis.
\end{itemize}

The model we propose for the dynamics of Brent futures with maturity $T$ is as follows:
at time $t \in [0, T],$ the Brent futures $B(t,T)$ satisfies
\begin{equation}
\label{Brent_model_proposed}
\left\{
	\begin{array}{lll}
		dB(t,T) = \sigma^B(t,B(t,T)) dW^{B}_t \\
        B(0,T) \text{ given},
	\end{array}
\right.
\end{equation}
where $W^{B}$ is a standard Brownian motion and the \emph{local volatility} $\sigma^B$ is a well-behaved deterministic function.
Then, we assume a dynamics for Gasoil futures with maturity $T$ given by 
\begin{equation}
\label{Gasoil_model_proposed}
\left\{
	\begin{array}{ll}
		dG(t,T) = \sigma^G(t,T) dW^{G}_t \\
        G(0,T) \text{ given},
	\end{array}
\right.
\end{equation}
where $W^{G}$ is a standard Brownian motion correlated with $W^{B}$,
$$
d\langle W^{G}, W^{B} \rangle_t = \rho^{G,B}_T dt,
$$
with $|\rho^{G,B}_T| \le 1.$  Brent volatility provides a natural benchmark for the overall market uncertainty, to which we add a correction that represents additional variability specific to the Gasoil market, reflecting e.g. refining margins. We assume that
\begin{equation}
\label{sigmaG}
\sigma^G(t,T) = \sigma^B(t,B(t,T)) + h\bigl(t, G(t,T) - B(t,T)\bigr),
\end{equation}
where $h$ is a deterministic function depending on time and on the Gasoil-Brent \emph{crack spread} $G(t,T) - B(t,T)$.  In our specification $h$ depends solely on time and the crack spread and does not introduce additional sources of randomness. It will be determined using a data-driven (machine learning) approach (see Section \ref{h_function_section}).

\subsection{The dynamics of Brent futures}
\label{the_spread_model}

We specify here the Brent futures model \eqref{Brent_model_proposed}. We do so by adapting the log-normal mixture framework of Brigo and Mercurio \cite{brigo2003alternativeasset,brigo2006interest} and reformulating it  in terms of normal processes: we choose the volatility function $\sigma^B$ in \eqref{Brent_model_proposed} so that the probability density of $B(t,T)$ becomes a mixture of Gaussian densities (cf. also \cite{alexander2004normal}). 

For $p\geq 1$ we denote by $L^p([0,T])$ the set of (deterministic) functions $f:[0,T]\to\R$ such that $\int_0^T|f(t)|^pdt<\infty$.

\begin{proposition}
\label{mixture_proposition}
Let $N\in\mathbb{N}$, with $N\geq 2$. Let:
\begin{enumerate}
    \item $\lambda_1,\ldots,\lambda_N\in (0,1)$ such that $\lambda_1+\cdots+\lambda_N=1$;
    \item $\mu_1,\ldots,\mu_N\in L^1([0,T])$;
\item $\sigma_1,\ldots,\sigma_N\in L^2([0,T])$.
\end{enumerate}
Let $p^i_t$, $i=1,\ldots, N$ and $t\in[0,T]$, denote the density function of the Gaussian law with mean and variance given by
\begin{equation}\label{M-V}
M^{i,T}_t=B(0,T)+\int_0^t \mu_i(s)ds \quad\mbox{and}\quad
V^i_t=\int_0^t\sigma^2_i(s)ds
\end{equation}
respectively, where $B(0,T)$ denotes the initial condition of \eqref{Brent_model_proposed}. Then, the Gaussian mixture density
\begin{equation}
    \label{B_density_condition}
    p_t(y) := \sum_{i=1}^{N} \lambda_i p_t^i(y), \quad t\in[0,T],
\end{equation}
is the transition density of the solution to \eqref{Brent_model_proposed} when the squared volatility is given by
\begin{equation}\label{Brent-local-square-vol}
    \sigma^B(t,y)^2 = \frac{\sum_{i=1}^{N}\lambda_i p_t^i(y)(\sigma_i(t))^2}{\sum_{i=1}^{N}\lambda_i p_t^i(y)} + \frac{\sum_{i=1}^{N}2\lambda_i \mu_i(t) \int_{y}^{+\infty} p_t^i(x)dx}{\sum_{i=1}^{N}\lambda_i p_t^i(y)}.
\end{equation}

\end{proposition}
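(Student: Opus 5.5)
The plan is the standard Brigo--Mercurio route: show that the mixture density \eqref{B_density_condition} solves the Fokker--Planck equation associated with \eqref{Brent_model_proposed} under the choice \eqref{Brent-local-square-vol}. For $dB=\sigma^B(t,B)\,dW^B$ with no drift, the forward equation reads
\[
\partial_t p_t(y)=\tfrac12\,\partial_{yy}\bigl(\sigma^B(t,y)^2 p_t(y)\bigr),
\]
with $p_0=\delta_{B(0,T)}$. The strategy is to find $\sigma^B$ so that this holds, then argue that the density of the solution coincides with $p_t$.

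\textbf{Step 1: the component equations.} Each Gaussian density $p^i_t$, with mean $M^{i,T}_t$ and variance $V^i_t$ as in \eqref{M-V}, is the law of $B(0,T)+\int_0^t\mu_i\,ds+\int_0^t\sigma_i\,dW$. It therefore satisfies, for a.e.\ $t$ (the coefficients are only $L^1$/$L^2$, so derivatives hold almost everywhere),
\[
\partial_t p^i_t=-\mu_i(t)\,\partial_y p^i_t+\tfrac12\sigma_i(t)^2\,\partial_{yy}p^i_t .
\]

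\textbf{Step 2: match the mixture to a driftless equation.} Summing with weights $\lambda_i$ gives
\[
\partial_t p_t=\sum_i\lambda_i\Bigl(-\mu_i\,\partial_y p^i_t+\tfrac12\sigma_i^2\,\partial_{yy}p^i_t\Bigr).
\]
We need this to equal $\tfrac12\partial_{yy}\bigl(\sigma^B(t,y)^2 p_t\bigr)$. Since $\sigma_i$ does not depend on $y$, we have $\sigma_i^2\,\partial_{yy}p^i=\partial_{yy}(\sigma_i^2 p^i)$. For the drift term, write
\[
-\mu_i\,\partial_y p^i=\tfrac12\,\partial_{yy}\Bigl(2\mu_i\int_y^{\infty}p^i(x)\,dx\Bigr),
\]
which holds because $\partial_y\int_y^\infty p^i=-p^i$. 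Hence it suffices that
\[
\sigma^B(t,y)^2 p_t(y)=\sum_i\lambda_i\sigma_i^2 p^i_t(y)+\sum_i 2\lambda_i\mu_i\int_y^\infty p^i_t(x)\,dx .
\]
Dividing by $p_t>0$ gives exactly \eqref{Brent-local-square-vol}. Checking these identities is the routine part.

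\textbf{Step 3: fix the integration constants.} We must check that no affine-in-$y$ term is lost when we integrate the equation $\partial_{yy}(\cdot)$ twice. Both sides decay as $y\to+\infty$, and the tail integrals vanish there, so the identity holds pointwise. We also need the right-hand side to be nonnegative for $\sigma^B$ to be real. When all $\mu_i\ge0$ this is automatic. In general it is an implicit assumption of the statement, and I would flag it in a remark rather than prove it.

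\textbf{Step 4: main obstacle.} The hard part is the converse direction. Knowing that $p_t$ solves the Fokker--Planck equation does not by itself show that $p_t$ \emph{is} the transition density of the SDE. This requires existence and uniqueness in law for \eqref{Brent_model_proposed} together with uniqueness for the forward equation, for instance via a superposition/mimicking argument in the spirit of Gy\"ongy. My first approach is to use Gy\"ongy's mimicking theorem directly. Take the mixture process $X_t=B(0,T)+\int_0^t\mu_I\,ds+\int_0^t\sigma_I\,dW$, with an independent index $I$ distributed according to $(\lambda_i)$. This process has marginals $p_t$, but it has a drift, so it cannot be mimicked by a driftless diffusion as it stands. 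For this reason I would rely on the Fokker--Planck computation rather than on mimicking. I would then assume, as is standard in this literature, enough regularity of $\sigma^B$ (local Lipschitz continuity and linear growth on the support) to ensure uniqueness, and I would note that this is the step where the paper's phrase ``well-behaved'' is used.
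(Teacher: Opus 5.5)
Your Steps 1--2 are the paper's proof. The paper also writes the Fokker--Planck equation of each auxiliary diffusion $dS_i=\mu_i\,dt+\sigma_i\,dW$ and sums with weights $\lambda_i$. It then calls ``straightforward computations'' exactly the identity $-\mu_i\partial_y p^i_t=\tfrac12\partial_{yy}\bigl(2\mu_i\int_y^{\infty}p^i_t\bigr)$ that you make explicit. Your Step 4 caveat is legitimate: the paper simply asserts that solving the forward equation suffices and never addresses uniqueness.

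The one real problem is in Step 3. The integration-constant check is not needed for sufficiency, since you define $\sigma^B$ through the identity. But if you do the check, you must look at both ends, and you only look at $y\to+\infty$.

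As $y\to-\infty$ the tail integrals tend to $1$, so $\sigma^B(t,y)^2p_t(y)\to 2\sum_i\lambda_i\mu_i(t)$, while $p_t(y)$ decays like a Gaussian. Unless $\sum_i\lambda_i\mu_i(t)=0$, one of two things happens. Either $\sigma^B(t,y)^2<0$ for very negative $y$, so $\sigma^B$ is not real. Or $\sigma^B(t,y)^2\sim 2\sum_i\lambda_i\mu_i(t)/p_t(y)$ grows like $e^{cy^2}$, which is incompatible with the linear-growth assumption of your Step 4.

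Moreover, if this sum is nonzero on a set of positive measure, the mean of $p_t$, namely $B(0,T)+\sum_i\lambda_i\int_0^t\mu_i(s)\,ds$, is not constant. Then $p_t$ cannot be the law of a true martingale. The missing hypothesis is condition \eqref{mu_constraint}. The paper derives it only afterwards, in Remark~\ref{rem-lambda-mu}, from $\mathbb{E}B(t,T)=B(0,T)$; it should be an explicit hypothesis of the statement.

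In particular, your remark that nonnegativity is automatic when all $\mu_i\ge 0$ covers only the trivial case $\mu_i\equiv 0$, once \eqref{mu_constraint} is imposed with $\lambda_i>0$. Under \eqref{mu_constraint} you may replace $\int_y^{\infty}$ by $-\int_{-\infty}^{y}$. Mills-ratio estimates then show that the second term in \eqref{Brent-local-square-vol} is bounded in $y$ for fixed $t>0$ with $V^i_t>0$, which makes your Step 4 regularity assumption reasonable. Nonnegativity of $\sigma^B(t,y)^2$ remains a genuine extra assumption even then: it can fail, for example, when some $\mu_i$ changes sign on $[0,t]$. So your flag on that point is correct.
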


\begin{proof}
We just need to prove that the density $p_t$ in \eqref{B_density_condition} solves the Fokker-Planck equation associated to equation \eqref{Brent_model_proposed} when the volatility $\sigma^B$ is given by formula \eqref{Brent-local-square-vol}, that is
$$
    \frac{\partial}{\partial t}p_t(y)  = \frac{1}{2}\frac{\partial^2}{\partial y^2} \big(p_t(y)(\sigma^B(t, y))^2 \big) .
$$
For $i=1,\ldots,N$, consider the diffusion process
\begin{equation}
    \label{N_diff_processes_dynamics}
    \left\{
	\begin{array}{ll}
		dS_i(t) = \mu_i(t) dt + \sigma_i(t) dW_t \\
        dS_i(0) = B(0,T).
	\end{array}
\right.
\end{equation}
In particular, the marginal density of $S_i(t)$ is given by $p^i_t$, so it satisfies the associated Fokker-Planck equation:
$$
    \frac{\partial}{\partial t}p_t^i(y)  = -  \frac{\partial}{\partial y}\big(\mu_i(t) p_t^i(y)\big) + \frac{1}{2}\frac{\partial^2}{\partial y^2} \big(p_t^i(y)(\sigma_i(t))^2 \big).
$$
Therefore,
\begin{align*}
\frac{\partial}{\partial t}p_t(y) 
& =\sum_{i=1}^N{\lambda_i}\frac{\partial}{\partial t}p^i_t(y)\\
&=\sum_{i=1}^N{\lambda_i} \Big(-\frac{\partial}{\partial y}\big(\mu_i(t) p_t^i(y)\big)
+ \frac{1}{2}\frac{\partial^2}{\partial y^2} \big(p_t^i(y)(\sigma_i(t))^2 \big)\Big)\\
&=-\frac{\partial}{\partial y}\Big(\sum_{i=1}^N{\lambda_i} \mu_i(t) p_t^i(y)\Big)
+ \frac{1}{2}\frac{\partial^2}{\partial y^2} \Big(\sum_{i=1}^N{\lambda_i} p_t^i(y)(\sigma_i(t))^2 \Big)
\end{align*}
Now, taking $p_t$ and $(\sigma^B)^2$ as in formula \eqref{B_density_condition} and \eqref{Brent-local-square-vol} respectively, straightforward computations give 
$$
-\frac{\partial}{\partial y}\Big(\sum_{i=1}^N{\lambda_i} \mu_i(t) p_t^i(y)\Big)
+ \frac{1}{2}\frac{\partial^2}{\partial y^2} \Big(\sum_{i=1}^N{\lambda_i} p_t^i(y)(\sigma_i(t))^2 \Big)
=\frac{1}{2}\frac{\partial^2}{\partial y^2} \big(p_t(y)(\sigma^B(t, y))^2 \big)
$$
and the statement holds.

\end{proof}

\begin{remark}\label{rem-lambda-mu}
\rm
We note that the parameters $\lambda_i$ and the functions $\mu_i$, $i=1,\ldots,N$, must satisfy 
the condition
\begin{equation}
     \label{mu_constraint}
     \sum_{i=1}^{N} \lambda_i \mu_i(t) = 0, \ \ \text{for each $t\in [0,T]$}.
\end{equation}
Using \eqref{Brent_model_proposed} and the fact that $p_t$ in \eqref{B_density_condition} is the probability density function of $B(t,T)$,
\begin{align*}
\E(B(t,T))=B(0,T) =\int_{-\infty}^{+\infty} yp_t(y)dy
=\sum_{i=1}^N\lambda_i \int_{-\infty}^{+\infty} yp^i_t(y)dy.
\end{align*}
Since $p^i_t$ denotes the density function of the Gaussian law with mean $B(0,T)+\int_0^t \mu_i(s)ds$, 
\begin{align*}
B(0,T) = B(0,T)+\sum_{i=1}^N\lambda_i \int_0^t \mu_i(s)ds
\end{align*}
and \eqref{mu_constraint} follows.

\end{remark}

While the functions $\mu_i$ cannot be arbitrarily chosen because they have to respect the constraint in \eqref{mu_constraint}, 
the volatility functions $\sigma_i(\cdot), \ \forall i=1, \dots, N$ are free, that is, they can be arbitrarily set. Hence, we consider the case where $\sigma_i$ is piecewise constant.

\def\ss{\gamma}
Let us suppose to work with $M$ different tenors (i.e. maturities of the futures contracts) $0=T_0 < T_1 < T_2 < \dots < T_M$. 
For $i=1,\ldots,N$, we consider parameters $\ss_{i,j}>0$, $j=1,\ldots,M$.
 Then, for the maturity $T=T_m$, $m\in\{1,\ldots,M\}$, we set the volatility function on the time horizon $[0,T_m]$ as follows
\begin{equation}\label{sigma-i}
    \sigma_{i,m}(t) = \sum_{i=1}^{m} \ss_{i,j} \mathbb{I}(t)_{(T_{j-1}, T_{j}]},\quad t\in[0,T_m],
\end{equation}
in which $T_0=0$. Note that $\sigma_{i,m}$ is the restriction of $\sigma_{i,M}$ over $[0,T_m]$. See
Figure \ref{fig:sigma-i} for an example.

\begin{figure}[H]
    \centering
    \includegraphics[width=0.5\linewidth]{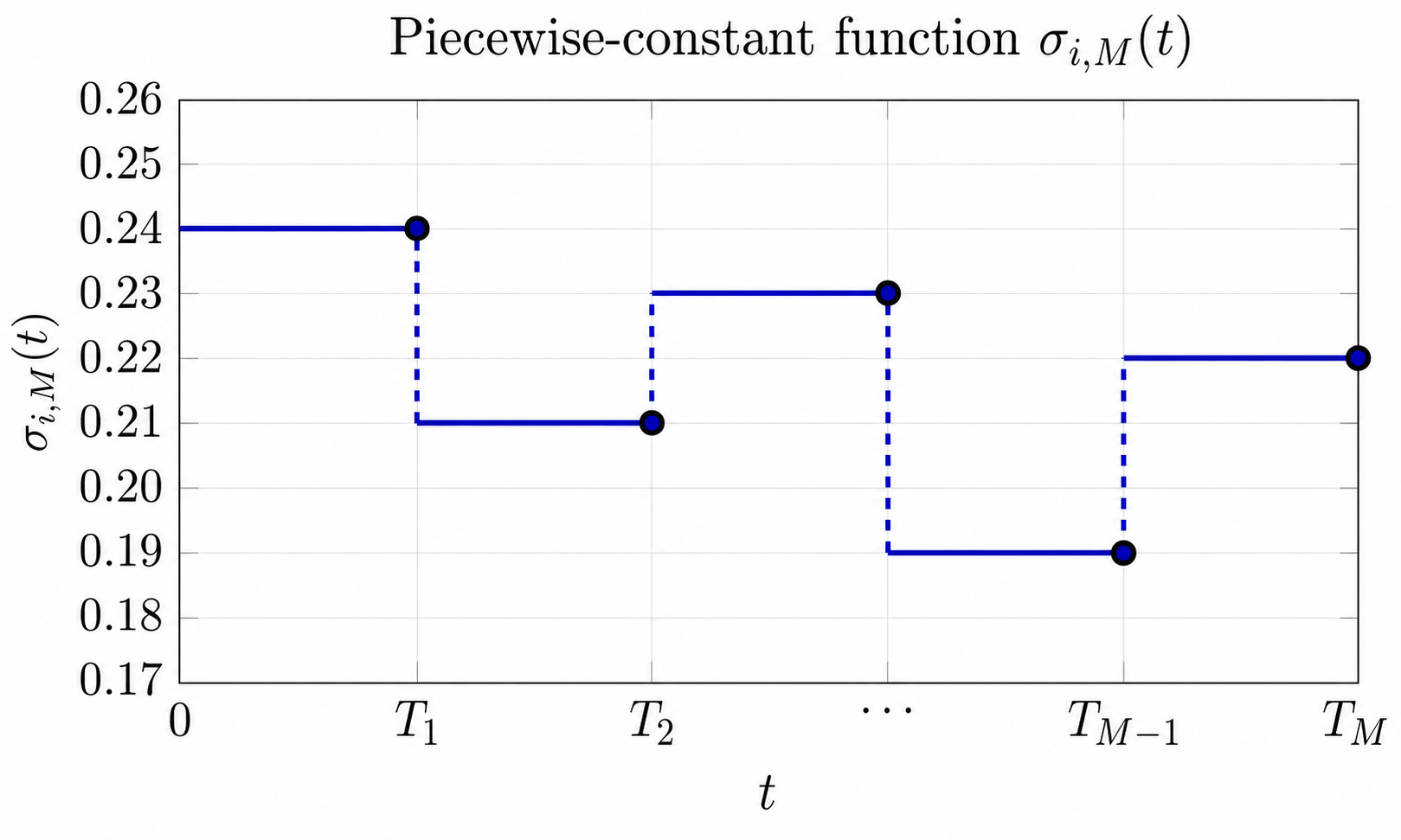}
    \caption{An example of a piecewise-constant volatility function $\sigma_i$}
    \label{fig:sigma-i}
\end{figure}
\noindent
Note that $\sigma_{i,m}$ defines a maturity-dependent volatility function, that captures the main term-structure effects observed in the market, reflecting different volatility conditions across short, medium, and long horizons. The volatility is constant within each interval $(T_{k-1},T_k]$ (time bucket) and may change (upward or downward) only at the bucket boundaries.

\subsection{The dynamics of Gasoil futures}

From \eqref{Gasoil_model_proposed} and \eqref{sigmaG}, for the Gasoil futures model it remains to specify the function $h$. 

\subsubsection{Rolling futures time series}
\label{sec:rolling_series}

In futures markets, commodities are traded through multiple contracts with different maturities rather than through a single continuously traded price. Since each futures contract expires and may lose liquidity near delivery, using a single contract does not provide a consistent basis for analysing historical returns or volatility over time. Therefore, historical analysis requires the construction of a continuous price series that remains representative of the liquid market.

A \textit{rolling futures series} is built exactly for this purpose. The idea is to follow one contract for a period, then to switch to a contract with later delivery, and repeat this over time. Formally, 
given $F(t,T)$  the price at time $t$ of the futures contract that delivers at $T$,
we define a rule $R(t)$ that chooses, at each date $t$, which maturity (which contract) we use. The rolling series is then
$$
\widetilde{F}(t) := F\bigl(t, R(t)\bigr).
$$
The rule $R(t)$ is typically piecewise constant over time and changes only at a set of roll dates $\tau_1, \tau_2, \ldots$, usually corresponding to (or closely approximating) the maturity dates.  At each roll date we move from the current contract to the next one (or to the most liquid one), so that the series remains continuous in time and representative of the market.

This rolling idea is also useful when we work with a whole tenor grid $0=T_0<T_1<\cdots<T_M$. In that case we want to track several points of the futures curve, not just one contract. We can therefore build one rolling series for each tenor. For every $T_m$ we consider, depending on the problem, a tenor-specific rule $R_m(t)$ and the corresponding rolling series
$$
\widetilde{F}_m(t) := F\bigl(t, R_m(t)\bigr), \qquad m=1,\ldots,M,\quad t\geq 0.
$$
So, for each commodity we obtain $M$ rolling series, one for each tenor. Let us illustrate this with a stylized example.

Consider futures contracts with expiry $m$-months ahead and, for simplicity, assume that each months is 30 days. This gives maturities $T_k=k\times 30$, $k\geq 1$. Let now $\C_m$ denote the contract with maturity $T_m$. We aim to construct a historical synthetic rolling day-by-day series $\tilde F_m(t)$ for this contract.
A way to solve this problem is the following:
\begin{itemize}
	\item[-] for $t=1,\ldots, 30$, we set $\widetilde{ F}_m(t)=F(t, T_m)$;
	\item[-] for $t=30+1,\ldots, 30+30$, we set $\widetilde{ F}_m(t)=F(t, T_m+30)=F(t, T_{m+1})$;
	\item[] $\vdots$
	\item[-] for $t=k\times 30+1,\ldots, k\times 30+30$, we set $\widetilde{ F}_m(t)=F(t, T_m+k\times 30)=F(t, T_{m+k})$.
\end{itemize}

The data $F(t, T_{m+k})$ are taken from past contracts, possibly already expired, with maturity  $T_{m+k}$. 

In practice, months do not last exactly 30 days and contracts typically become less liquid
when approaching delivery. Therefore, the roll is performed a few business days before
maturity, for example two or three days before expiry. In addition, the raw series obtained
by simply switching from one contract to the next may contain artificial jumps at roll
dates, because the two contracts usually have different prices. Following the
single-futures-roll construction of \cite{lopez2018advances}, these jumps can be removed
by computing cumulative roll gaps and subtracting them from the raw spliced price
series. These are the rolling series we use to compute past tenor-by-tenor data and to
build term-structure features for the rest of the methodology.

\subsubsection{The crack spread function $h$}
\label{h_function_section}

We see from equation \eqref{sigmaG} that Gasoil dynamics incorporate 
both market-implied information, through the Brent volatility function $\sigma^B$, and historical information, through the function $h$.

Let $0=T_0<T_1<\cdots<T_M$ be the tenor grid under consideration. Throughout this section, all historical quantities are computed on the tenor-specific rolling futures series as described in Section~\ref{sec:rolling_series}. In particular, for each tenor $T_m$ we work with the rolling (roll-adjusted) series $B_m(t)$ and $G_m(t)$, which provide a consistent historical record for the analysis of past returns and historical volatilities of Brent and Gasoil.

Let $\mathcal{T}=\{t_0,t_1,\dots,t_L\}$ denote the set of observation dates, with $t_0<t_1<\cdots<t_L$, which we assume to be consecutive trading days (otherwise, one has to normalize with respect to the time variation).

\paragraph{Step 1. Returns and crack spreads.}

Fix a tenor $T_m$. For each observation date $t_k\in\mathcal{T}$, define the one-period arithmetic returns
$$
r_B(t_k,T_m) := B_m(t_k)-B_m(t_{k-1}),
\qquad
r_G(t_k,T_m) := G_m(t_k)-G_m(t_{k-1}),
$$
for all $k=1,\dots,L$.
Next, define the crack spread level at date $t_k$ and tenor $T_m$ as
$$
S(t_k,T_m) := G_m(t_k)-B_m(t_k),
\quad m=1,\dots,M.
$$
This definition is consistent with the cluster-based construction below, since both the crack levels and the historical descriptors are computed from rolling time series constructed using the same rolling procedure for Brent and Gasoil, including identical contract-switching rules.

\paragraph{Step 2. Rolling statistical descriptors.}
\label{Step 2. Rolling statistical descriptors}
Fix a window length $n\in\mathbb{N}$. For each date $t_k\in\mathcal{T}$ such that the rolling window is well defined, and for each tenor $T_m$, define the historical volatilities
\begin{align}
&\hat{\sigma}_B(t_k,T_m)
:=
\operatorname{std}\big(
r_B(t_u,T_m):u=k-n+1,\dots,k
\big),\label{sigmahat-B}\\
&\hat{\sigma}_G(t_k,T_m)
:=
\operatorname{std}\big(
r_G(t_u,T_m):u=k-n+1,\dots,k
\big)\label{sigmahat-G}.
\end{align}

We also define the rolling average crack level
$$
\bar{S}(t_k,T_m)
:=
\frac{1}{n}\sum_{u=k-n+1}^{k} S(t_u,T_m),
$$
and the relative historical volatility spread
$$
V(t_k,T_m) := \hat{\sigma}_G(t_k,T_m)-\hat{\sigma}_B(t_k,T_m).
$$
Hence, $V(t,T_m)>0$ indicates that Gasoil has recently been more volatile than Brent at tenor $T_m$ and viceversa.

\paragraph{Step 3. Feature-space representation.}

For each admissible date $t  \in \mathcal{T}$, define the feature vector
$$
x_t
:=
\Big(
V(t,T_1),\dots,V(t,T_{M}),
\bar{S}(t,T_1),\dots,\bar{S}(t,T_{M})
\Big)
\in \mathbb{R}^{2M}.
$$
Thus, each date is mapped into a \(2M\)-dimensional feature space collecting: 
(i) the term structure of relative historical volatility spreads, and 
(ii) the term structure of average crack levels.
Since these quantities are computed over rolling windows, consecutive observations are based on largely overlapping data and are therefore strongly correlated. To reduce this mechanical dependence we downsample the sequence \(\{x_t\}_{t\in\mathcal T}\) by retaining one observation every \(n_2\) dates.

In our implementation, we set \(n_2 \approx n/2\), where \(n\) is the length of the rolling window. This choice reduces the overlap between consecutive retained observations while preserving a sufficiently large sample for clustering.

The clustering is therefore performed on the reduced set
\[
\{x_t\}_{t\in\mathcal T'},
\qquad
\mathcal T' \subset \mathcal T.
\]
To avoid unnecessary notation, however, we continue to denote the feature sample by
\[
\{x_t\}_{t\in\mathcal T}.
\]

\paragraph{Step 4. Clustering of historical crack-spread and volatility-spread configurations: cityblock $k$-medians specification.}

Let
$$
X=\{x_1,\dots,x_L\}\subset \mathbb{R}^{2M}
$$
denote the set of feature vectors obtained from the admissible, possibly downsampled, observation dates. We define representative market configurations through a clustering procedure with a fixed number $k$ of clusters; in our implementation (see Section \ref{Calibration}), we take $k=6$. 
	We empirically observe that this number is large enough to capture different market configurations and not so large as to produce noticeable overfitting.

We now formalize the clustering problem. For each observation $x_i\in X$ and each cluster $j\in\{1,\dots,k\}$, introduce the binary allocation variables
$$
z_{ij}\in\{0,1\},
\qquad i=1,\dots,L,\quad j=1,\dots,k,
$$
where $z_{ij}=1$ if and only if $x_i$ is assigned to cluster $j$, and $z_{ij}=0$ otherwise. Each observation $x_i$ is assigned to exactly one cluster, so that
$$
\sum_{j=1}^k z_{ij}=1,
\qquad i=1,\dots,L.
$$

For each cluster $j$, let $c_j\in\mathbb{R}^{2M}$ denote its representative vector. The clustering criterion is based on the cityblock distance
$$
d_1(\xi,\eta):=\sum_{\ell=1}^{2M} |\xi_\ell-\eta_\ell|,
\qquad \xi,\eta\in\mathbb{R}^{2M}.
$$
The clusters $\{\bar z_{i,j}\}_{i,j}$ and their representative vectors $\bar c_1,\ldots,\bar c_k$ are then defined as a solution of the optimization problem
$$
(\{\bar z_{i,j}\}_{i,j}, \bar c_1,\ldots,\bar c_k)\in \argmin_{\substack{\{z_{ij}\}_{i,j},\, c_1,\dots,c_k :\\
z_{ij}\in\{0,1\}, \sum_{j=1}^k z_{ij}=1	}}
\sum_{i=1}^L \sum_{j=1}^k z_{ij}\, d_1(x_i,c_j).
$$
It can be shown \cite{hastie2009elements} that, for $j=1,\ldots,k$,  the optimal representative vector $\bar c_j$ is obtained coordinate-wise as the median of the observations assigned to cluster $j$.
 Hence, the above specification corresponds to a $k$-medians procedure, rather than to the classical Euclidean $k$-means, which makes the construction less sensitive to extreme observations.
We write the corresponding clusters as
$$
\mathcal C_j:=\{x_i\in X:\ \bar z_{ij}=1\},
\qquad j=1,\dots,k.
$$

\paragraph{Step 5. Structure of the centroids.}
Let $\bar c_j\in\mathbb{R}^{2M}$, $j=1,\dots,k$, denote the representative vectors obtained from the clustering problem above. Each $\bar c_j$ is decomposed as
\begin{equation}
\label{centroid}
\bar c_j=
\bigl(
c_j^{V}(T_1),\dots,c_j^{V}(T_M),c_j^{S}(T_1),\dots,c_j^{S}(T_M)
\bigr),
\end{equation}
where $c_j^{V}(T_m)$ and $c_j^{S}(T_m)$ denote, respectively, the representative relative historical volatility spread and the representative crack level at tenor $T_m$ associated with cluster $j$.

Thus, the first $M$ coordinates of $\bar c_j$ describe the typical term structure of the relative volatility spread within cluster $j$, while the last $M$ coordinates describe the corresponding term structure of the crack level.

\paragraph{The function $h$.}
Once the clustering procedure has been performed on the historical sample, the estimated centroids are regarded as fixed. For each cluster $j=1,\dots,k$, let
$$
\bar c_j=
\bigl(
c_j^{V}(T_1),\dots,c_j^{V}(T_M),c_j^{S}(T_1),\dots,c_j^{S}(T_M)
\bigr)\in\mathbb R^{2M}
$$
denote the corresponding representative vector, where $c_j^V(T_m)$ is the volatility-spread component and $c_j^S(T_m)$ is the crack-level component at tenor $T_m$.

At each time $t$ and for each tenor $T_m$, the correction function $h$ is defined by interpolating the volatility-spread components of the centroids as a function of their crack-level components. More precisely,
$$
h\bigl(t,S(t,T_m)\bigr)
=
\operatorname{Interp}\Bigl(
\{c_j^S(T_m)\}_{j=1}^k,\,
\{c_j^V(T_m)\}_{j=1}^k,\,
S(t,T_m)
\Bigr),
\qquad m=1,\dots,M,
$$
where $\operatorname{Interp}$ denotes the { linear interpolation} rule adopted in the numerical implementation.

Hence, $h(t,S(t,T_m))$ maps the current crack level $S(t,T_m)$ into a volatility correction obtained from the historical centroids. In this way, the centroids do not induce a purely discrete regime-switching rule; rather, they provide the interpolation nodes of a crack-dependent correction function.

\paragraph{Summary of the construction.}

The above procedure associates to each admissible observation date $t$ a state vector $x_t\in\mathbb{R}^{2M}$ that summarizes (i) the term structure of relative historical volatility and (ii) the term structure of crack levels. The set of such vectors is then partitioned into a finite number of clusters, each cluster being interpreted as a representative historical configuration of the Brent--Gasoil crack market. The final output therefore consists of a collection of $k$ centroids describing typical configurations of the pair
$$
\bigl(V(t,T_m),\bar{S}(t,T_m)\bigr)_{m=1}^{M},
$$
which we use as reduced-form descriptors of the market state and as inputs to build the Gasoil volatility function in \eqref{sigmaG}. 

So, for a fixed tenor $T_m$, each historical observation is represented in the two-dimensional plane by the pair
$$
\bigl(\bar S(t,T_m),\,V(t,T_m)\bigr),
$$
where $\bar S(t,T_m)$ denotes the rolling average Gasoil-Brent crack spread and $V(t,T_m)$ denotes the corresponding historical volatility spread. Clustering these observations groups similar historical crack-spread and volatility-spread configurations at the tenor level, see Figure \ref{fig:cluster_tenor_t3}.

\begin{figure}[H]
    \centering
    \includegraphics[width=0.75\textwidth]{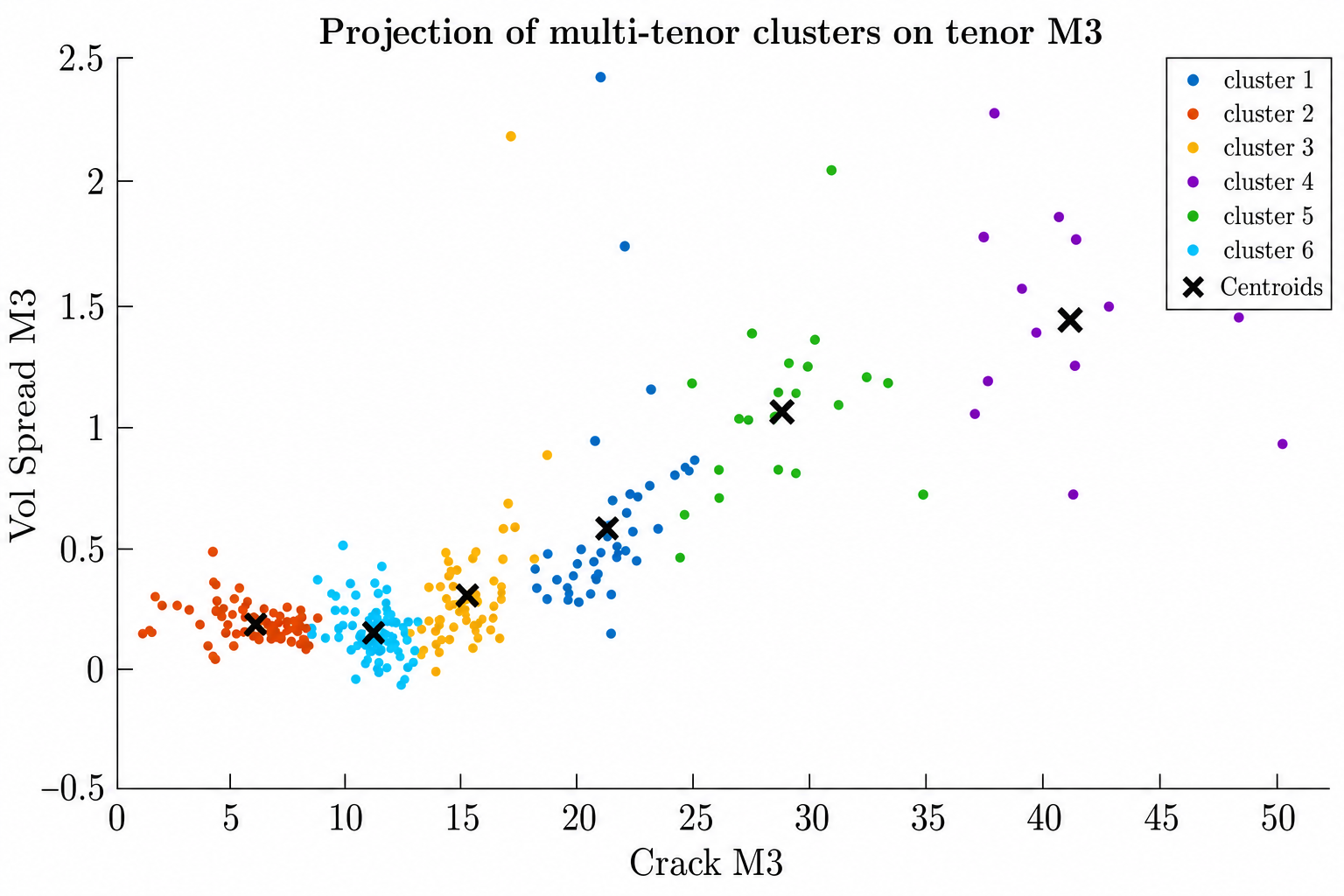}
    \caption{Scatter plot of the rolling average crack spread $\bar S(t,T_3)$ versus the historical volatility spread $V(t,T_3)$. Each point corresponds to one historical observation date. Colors identify the clusters obtained from the $k$-means procedure, while black crosses denote the corresponding cluster centroids.}
    \label{fig:cluster_tenor_t3}
\end{figure}

\section{Calibration}
\label{Calibration}
We now turn to the calibration of the processes, with the aim of recovering the parameters introduced in Section \ref{the_spread_model}: the Brent volatility $\sigma^B$, the correlation coefficients $\rho^{B,G}$ and  the spread-function $h$. We first estimate the Brent volatility function in \eqref{Brent_model_proposed} to observed European option prices, which are available since this is a liquid market.

\subsection{Brent dynamics calibration}
For the calibration of the Brent dynamics, 
we consider 
a normal mixture approach based on  Proposition \ref{mixture_proposition}.  We recall that European call/put options on Brent futures have, up to negligible time interval, the same maturity date.

Let \( X=\{X_t\}_{t\ge 0} \) denote a price process. For a fixed maturity \(T\), suppose that the marginal density of \(X_T\) under the risk neutral measure
can be written as
$$
p_T(y)=\sum_{i=1}^N \lambda_i\, p_T^{(i)}(y),
\qquad \lambda_i \ge 0,\quad \sum_{i=1}^N \lambda_i =1,
$$
where \(p_T^{(i)}\) are suitably chosen base densities for which European option pricing formulas are available in closed or semi-closed form.
Under this assumption, European option prices admit an immediate decomposition. Given a European payoff \(\Phi(X_T)\) and assuming that the interest rate $r=0$, one has 
\[
O(0;K,T)
=
\mathbb{E}
\!\left[\Phi(X_T)\right]
=
\int_\R \Phi(y)\,p_T(y)\,dy
=
\sum_{i=1}^N \lambda_i\, O^{(i)}(0;K,T),
\]
where
\[
O^{(i)}(0;K,T)
:=
\int_\R \Phi(y)\,p_T^{(i)}(y)\,dy
\]
is the option price generated by the \(i\)-th base model. Hence, the price of a European option is expressed as a convex combination of the prices of the same contract evaluated under simpler auxiliary dynamics.  In our setting, they are given by Bachelier specifications with constant mean and variance parameters, which allows for an efficient calibration to market option data while accurately reproducing the observed implied volatility.

The market provides, for a collection of maturities $\{T_m\}_{m=1}^M$ and strikes $\{K_{m,j}\}_{j=1}^{J_m}$, quoted information in the form of implied volatilities $\sigma^{\mathrm{mkt}}_{{\mathrm{IV}}}(K_{m,j},T_m)$ associated with European call/put options on Brent futures.

	Let us note that in commodity
markets, option smiles are typically written on futures with different maturities, hence on
different underlying contracts, rather than on a single underlying asset. As noted by
\cite{Albani2018}, this makes it difficult to observe a full option-price surface for a
fixed futures. For this reason, and consistently with the concerns discussed by
\cite{bouchouev2023virtual} for oil options, we do not attempt to calibrate a fully
unconstrained two-dimensional volatility surface. Instead, we calibrate maturity-wise
marginal mixture distributions and impose only the consistency conditions required by the reconstruction of the dynamics.

To compute the model implied volatility, we exploit Proposition \ref{mixture_proposition}: for each quoted maturity $T_m$, we represent the marginal law of the Brent futures $B_{T_m}=B(T_m,T_m)$ as a Gaussian mixture: 
\begin{equation}\label{p-Tm}
p_{T_m}(y)=\sum_{i=1}^N \lambda_i\, p_{T_m}^{(i)}(y),
\qquad \lambda_i \ge 0,\quad \sum_{i=1}^N \lambda_i =1,\quad p_{T_m}^{(i)}\sim \mathcal N\!\left(M^{i,m}_{T_m},\,V^{i,m}_{T_m}\right),
\end{equation}
where the notation $p\sim \mathcal N\!\left(M,\,V\right)$ means that $p$ is the Gaussian probability  density with mean $M$ and variance $V$. In \eqref{p-Tm}, $M^{i,m}_{T_m}$ and $V^{i,m}_{T_m}$ are given in \eqref{M-V} with the choice $T=T_m$.
In particular, the dependence on the maturity index $m$ is made explicit, so that both the mean component and the cumulative variances are allowed to vary across quoted tenors. Moreover, from Remark \ref{rem-lambda-mu}, we impose
\[
\sum_{i=1}^N \lambda_i\,M^{i,m}_{T_m}=B(0,T_m).
\qquad m=1,\dots,M,
\]

For each component $i$ and maturity $T_m$, we write
\[
V^{i,m}_{T_m}=\bar{\sigma} _{i,m}^2\,T_m,
\]
so that $V^{i,m}_{T_m}$ is interpreted as the cumulative variance of the $i$-th component up to $T_m$.

We denote by $\Theta$ the full set of model parameters, given by $\lambda_i, \bar \sigma^{i,m}$ and $\mu_i$,  $i=1,\ldots,N$, satisfying all constraints just described.  
Denoting by $\sigma^{\mathrm{model}}_{{\mathrm{IV}}}(K,T;\Theta)$ the implied volatility obtained by inverting the model price $O^{\mathrm{model}}(K,T;\Theta)$, the calibration is then carried out by minimizing the least-squares objective in implied-volatility space
\[
\min_{\Theta}\sum_{m=1}^M\sum_{j=1}^{J_m}
\left(
\sigma^{\mathrm{model}}_{{\mathrm{IV}}}(K_{m,j},T_m;\Theta)-\sigma^{\mathrm{mkt}}_{{\mathrm{IV}}}(K_{m,j},T_m)
\right)^2.
\]
After calibration, we can recover the volatility functions $\sigma_{i,m}$ in \eqref{sigma-i}.
The calibrated parameters $\bar{\sigma}_{i,m}$ give 
the cumulative variances $V^{i,m}_{T_m}$ which are in turn converted into piecewise-constant volatility levels over the successive maturity intervals by matching increments of total variance. Setting $T_0:=0$, we define
\[
\gamma_{i,1}^2=\frac{V^{i,1}_{T_1}}{T_1},
\qquad
\gamma_{i,m}^2=\frac{V^{i,m}_{T_m}-V^{i,m-1}_{T_{m-1}}}{T_m-T_{m-1}},
\qquad m=2,\dots,M.
\]
Equivalently,
\[
V^{i,m}_{T_m}=\sum_{k=1}^m \gamma_{i,k}^2\,(T_k-T_{k-1}),
\qquad m=1,\dots,M,
\]
so that the piecewise-constant volatility levels reproduce the calibrated total variance at each quoted maturity.

This piecewise-constant volatility representation is well defined whenever the calibrated cumulative variances are non-decreasing across maturities, 
\[
V^{i,m}_{T_m}\ge V^{i,m-1}_{T_{m-1}}, \qquad m=2,\dots,M.
\]
In practice, however, monotonicity of the cumulative variances is not imposed explicitly during the calibration step, so small local violations may occasionally arise.
{In such cases, the sequence $\{V^{i,m}_{T_m}\}_{m=1}^M$ should be interpreted
primarily as a discrete set of calibrated marginal variance parameters. The
piecewise-constant volatility representation is then an ex-post dynamic
reconstruction and is admissible only after enforcing non-negative variance
increments. This is done by applying a mild monotone projection to the calibrated
total variances before computing the levels $\gamma_{i,m}$. This step should not be
interpreted as recovering a unique global arbitrage-free volatility surface; it only
restores the internal consistency required to interpret $\gamma_{i,m}^2$ as variance
increments.}

\subsection{The correlation parameter}
	For each tenor $T_m$, the correlation $\rho^{G,B}_{T_m}$ between the Brownian motions driving the Brent and Gasoil futures dynamics must also be specified. This parameter cannot be calibrated from option prices, since Gasoil option prices are not quoted in the market. For this reason, we estimate it  from historical data, using rolling series of Brent and Gasoil futures. For each tenor $T_m$, we work with the rolling (roll-adjusted) series $B_m(t)$ and $G_m(t)$, which provide a consistent and comparable historical record across time, constructed using the same rolling convention. Denoting by
\[
r_B(t_k,T_m):=B_m(t_k)-B_m(t_{k-1}),
\qquad
r_G(t_k,T_m):=G_m(t_k)-G_m(t_{k-1}),
\]
the daily variations of the two rolling series, the tenor-specific correlation is estimated over a historical window of length $q=90$ days through the standard sample correlation coefficient, to account of changes across different market regimes and structural conditions. Indeed,
 restricting the estimation to a recent window, we obtain a correlation coefficient that better reflects current market conditions.
 The empirical correlation coefficient is then
\begin{equation*}
\label{rho-hat}
\hat{\rho}_{T_m}^{G,B}
=
\frac{
	\sum_{k=0}^{q-1}
	\bigl(r_G(t_k,T_m)-\bar r_G(T_m)\bigr)
	\bigl(r_B(t_k,T_m)-\bar r_B(T_m)\bigr)
}{
	\sqrt{
		\sum_{k=0}^{q-1}
		\bigl(r_G(t_k,T_m)-\bar r_G(T_m)\bigr)^2
	}
	\sqrt{
		\sum_{k=0}^{q-1}
		\bigl(r_B(t_k,T_m)-\bar r_B(T_m)\bigr)^2
	}
},
\end{equation*}
with
\[
\bar r_G(T_m)=\frac1q\sum_{k=0}^{q-1} r_G(t_k,T_m),
\qquad
\bar r_B(T_m)=\frac1q\sum_{k=0}^{q-1} r_B(t_k,T_m).
\]
In this way, each tenor $T_m$ is associated with a historical correlation estimate consistent with the joint behavior of Brent and Gasoil along the corresponding part of the futures curve. In simulations, for each tenor $T_m$ we use the historical estimate $\hat{\rho}_{T_m}^{G,B}$ from the previous $q$ observations, and we keep this coefficient fixed.

\section{Gasoil option pricing}
\label{Gasoil option pricing}

Once the Brent dynamics has been calibrated, the correction function $h$ has been specified, and the correlation parameters $\rho_{T_m}^{G,B}$ have been estimated, the Gasoil model is fully determined and can be used to  price  Gasoil options, which are not quoted in the market. Since we do not have closed-form pricing formulas, we compute option prices by Monte Carlo simulation. For a fixed quoted maturity $T_m$, we simulate the joint evolution of the two forward prices via the Euler scheme. The Brent process is generated according to the  local-volatility dynamics given in \eqref{Brent_model_proposed}, while the Gasoil process evolves according to  \eqref{Gasoil_model_proposed}, with volatility function $\sigma^G(\cdot,T_m)$ given by \eqref{sigmaG}. At each time step one generates correlated Gaussian shocks for Brent and Gasoil, consistently with the estimated historical correlation coefficient $\rho_{T_m}^{G,B}$, and updates the two processes using the current values of their volatilities.

This leads to the following Euler scheme: 
we consider a discrete Monte Carlo time grid
\[
0=s_0<s_1<\dots<s_n=T_m.
\]
For each time step
\[
\Delta s_k:=s_{k+1}-s_k,
\]
we generate two correlated Gaussian increments
\[
\Delta W_k^B=\sqrt{\Delta s_k}\,Z_k^B,
\qquad
\Delta W_k^G=\sqrt{\Delta s_k}\Bigl(\rho_{T_m}^{G,B}Z_k^B+\sqrt{1-(\rho_{T_m}^{G,B})^2}\,Z_k^G\Bigr),
\]
where $Z_k^B$ and $Z_k^G$ are independent standard normal random variables. The Brent process is then updated according to
\[
B(s_{k+1},T_m)
=
B(s_k,T_m)
+
\sigma^B\bigl(s_k,B(s_k,T_m)\bigr)\Delta W_k^B,
\]
while the Gasoil process is updated as
\begin{align*}
G(s_{k+1},T_m)
=
G(s_k,T_m)
+
\Bigl[
\sigma^B\bigl(s_k,B(s_k,T_m)\bigr)
+
h\bigl(s_k,G(s_k,T_m)-B(s_k,T_m)\bigr)
\Bigr]\Delta W_k^G.
\end{align*}

The price of a European option on Gasoil is then obtained as the risk-neutral expectation of the terminal payoff, approximated by the sample average of a large number $N_{{\mathrm{MC}}}$ of simulated payoffs. {Assuming, for simplicity, a zero interest rate,} the time-zero price of a European option with payoff $\Phi(G(T_m,T_m))$ and maturity $T_m$ is given by
\begin{align*}
O^G(0;K,T_m)
&=\mathbb{E}\bigl[\Phi(G(T_m,T_m))\bigr] \\
&\approx
\frac{1}{N_{\mathrm{MC}}}\sum_{\ell=1}^{N_{\mathrm{MC}}}\Phi\bigl(G^{(\ell)}(T_m,T_m)\bigr).
\end{align*}
In the case of a European call with strike $K$, one has $\Phi(x)=(x-K)^+$, in the case of a European put, $\Phi(x)=(K-x)^+$.  More generally, in the presence of a non-zero risk-free rate $r$, the pricing formula becomes
\begin{align*}
O^G(0;K,T_m)
&=
e^{-rT_m}\,\mathbb{E}\bigl[\Phi(e^{rT_m}G(T_m,T_m))\bigr]\\
&\approx
e^{-rT_m}\frac{1}{N_{\mathrm{MC}}}\sum_{\ell=1}^{N_{\mathrm{MC}}}\Phi\bigl(e^{rT_m}G^{(\ell)}(T_m,T_m)\bigr).
\end{align*}

Once Monte Carlo prices for Gasoil options have been obtained for a range of strikes and maturities $T_1, \dots, T_M$, we invert the Bachelier pricing formula to obtain the model-implied Gasoil volatility surface, and compare it with the one generated by the calibrated Brent model.

We rewrite both surfaces  in terms of moneyness, defined by
\[
k=\frac{K}{F(0,T_m)},
\]
where $F(0,T_m)$ denotes the corresponding initial futures price at maturity $T_m$. For each maturity $T_m$ the smiles are interpolated onto a common grid in the variable $k$ using piecewise cubic Hermite interpolation \cite{fritsch1980monotone}. This yields functions defined on a common moneyness grid
\[
k\mapsto \sigma^{G,\mathrm{mod}}_{{\mathrm{IV}}}(k,T_m),
\qquad
k\mapsto \sigma^{B,\mathrm{mod}}_{{\mathrm{IV}}}(k,T_m),
\]
which can be compared pointwise. We then define the model volatility spread by
\[
\Delta \sigma^{\mathrm{mod}}_{{\mathrm{IV}}}(k,T_m)
:=
\sigma^{G,\mathrm{mod}}_{{\mathrm{IV}}}(k,T_m)-\sigma^{B,\mathrm{mod}}_{{\mathrm{IV}}}(k,T_m).
\]

Finally, let $\sigma^{B,\mathrm{mkt}}_{{\mathrm{IV}}}(k,T_m)$ denote the market-implied Brent volatility surface, expressed in the same moneyness coordinates. We define the final Gasoil volatility surface as
\[
\sigma^{G,\mathrm{final}}_{{\mathrm{IV}}}(k,T_m)
:=
\sigma^{B,\mathrm{mkt}}_{{\mathrm{IV}}}(k,T_m)+\Delta \sigma^{\mathrm{mod}}_{{\mathrm{IV}}}(k,T_m).
\]
In this way, the the Gasoil volatility surface is anchored to the liquid Brent market quotes, while the relative deformation from Brent to Gasoil is provided by the model.

\section{Numerical Results}
\label{sec:numerical_results}

In this section, we  apply the proposed approach to reconstruct numerically the Gasoil implied volatility surface under different market configurations. For both Brent and Gasoil options, market implied volatilities are obtained from ICE (Intercontinental Exchange) end-of-day settlement data at the same valuation date $s_0$. 

We consider a grid of $M=11$ tenors, corresponding to futures maturities $T_1,\dots,T_M$, and a discrete set of strikes for each tenor. The analysis is performed on three representative dates: February 2,
April 21, and May 26, 2026. February 2 is used as a pre-stress benchmark,
corresponding to relatively calm market conditions before the escalation of the
crisis caused by the USA-IRAN conflict. April 21 represents the highly stressed market configuration observed
around late April, while May 26 captures a later phase, when
market conditions had partially stabilized but uncertainty was still significant.

For each date, we consider the ICE Brent and Gasoil market-implied volatility surfaces, denoted by
$$
\sigma^{B,\mathrm{mkt}}_{{\mathrm{IV}}}(k,T),
\qquad
\sigma^{G,\mathrm{mkt}}_{{\mathrm{IV}}}(k,T).
$$

We then compare the following three Gasoil model specifications.

\begin{enumerate}
    \item \textbf{Full model.}  
    The Gasoil volatility is obtained from the Brent volatility plus the crack-spread correction:
    $$
    \sigma^{G}(t,T_m)
    =
    \sigma^B(t,B(t,T_m))
    +
    h\bigl(t,C_m(t)\bigr),
    $$
    where
    $
    C_m(t)=G(t,T_m)-B(t,T_m)
    $
    is the crack spread for tenor $T_m$,
    $\sigma^B$ is the Brent model volatility obtained by the calibration procedure in Section \ref{Calibration} and $h$ is constructed as in Section \ref{h_function_section}.
The corresponding implied volatility surface is denoted by
\begin{equation}\label{sigma1}
\sigma^{G}_{{\mathrm{IV}}}(k,T).
\end{equation}    

    \item \textbf{Brent-only benchmark ($h=0$).}  
    This benchmark removes any Gasoil-specific correction and directly transfers the Brent implied volatility to Gasoil:
    $$
    \sigma^{G,h=0}(t,T_m)
    =
    \sigma^B(t,B(t,T_m)).
    $$
    The corresponding implied volatility surface is denoted by
\begin{equation}\label{sigma2}
    \sigma^{G,h=0}_{{\mathrm{IV}}}(k,T).
\end{equation}    

    \item \textbf{Fixed-spread benchmark.}  
    This benchmark uses a fixed crack-spread correction obtained from the clustering centroids (see \eqref{centroid}). For each tenor $T_m$, the current crack spread $C_m(s_0)$ is computed at the valuation date and used to linearly interpolate the centroid points
    $$
    \bigl\{(c_j^V(T_m),c_j^S(T_m))\bigr\}_{j=1}^{6}.
    $$
    Denoting by $\mathcal I_m$ the corresponding linear interpolant, the fixed correction is defined as
    $$
    h_m^{\mathrm{fix}} = \mathcal I_m\bigl(C_m(s_0)\bigr).
    $$
    The fixed-spread benchmark is therefore
 \begin{equation}\label{sigma3}
    \sigma^{G,\mathrm{fix}}(t,T_m) = \sigma^B(t,B(t,T_m))  +  h^{\mathrm{fix}}_m.\end{equation}    
    This correction is set at the valuation date and remains constant:
unlike the full model, it does not vary with the crack-spread state during the pricing simulation.
    
    The corresponding implied volatility surface is denoted by
    $$
    \sigma^{G,\mathrm{fix}}_{{\mathrm{IV}}}(k,T).
    $$
\end{enumerate}

The numerical analysis proceeds in two steps. First, we compare the Brent and Gasoil market surfaces to highlight the cross-market smile deformation that motivates the model. Second, we compare the observed Gasoil market surface with the three model-implied Gasoil surfaces provided by the full model ($\sigma^{G,\mathrm{mod}}_{{\mathrm{IV}}}(k,T)$), the Brent-only benchmark ($\sigma^{G,h=0}_{{\mathrm{IV}}}(k,T)$)  and the fixed-spread benchmark ($\sigma^{G,\mathrm{fix}}_{{\mathrm{IV}}}(k,T)$).

\subsection{Cross-market volatility structure}

We first compare the Brent market {impled volatility} 
with the Gasoil market {one} 
and with the model-implied Gasoil volatility 
 \eqref{sigma1}. The objective is to show the smile deformation from Brent to Gasoil.

Figures \ref{fig:cross_market_smiles_jan}--\ref{fig:cross_market_smiles_apr} report the comparison for the three selected dates. The smiles are represented on a common grid expressed in standard deviations from the ATM level. More precisely, for each commodity $X\in\{B,G\}$ and tenor $T_m$, we define
$$
z
=
\frac{K-X(0,T_m)}
{\sigma^{X,\mathrm{mkt}}_{\mathrm{IV}}(K_{\mathrm{ATM}},T_m)\sqrt{\tau_m}},
$$
where $X(0,T_m)$ is the futures price, $\sigma^{X,\mathrm{mkt}}_{\mathrm{IV}}(K_{\mathrm{ATM}},T_m)$ is the corresponding ATM implied volatility at time $t=0$, and $\tau_m$ is the time to maturity. Equivalently, a point $z$ on the horizontal axis corresponds to the strike
$$
K
=
X(0,T_m)
+
z\,\sigma^{X,\mathrm{mkt}}_{\mathrm{IV}}(K_{\mathrm{ATM}},T_m)\sqrt{\tau_m}.
$$
This normalization allows us to compare the relative shape of Brent and Gasoil smiles on the same scale.

\begin{figure}[H]
    \centering
    \includegraphics[width=0.95\textwidth]{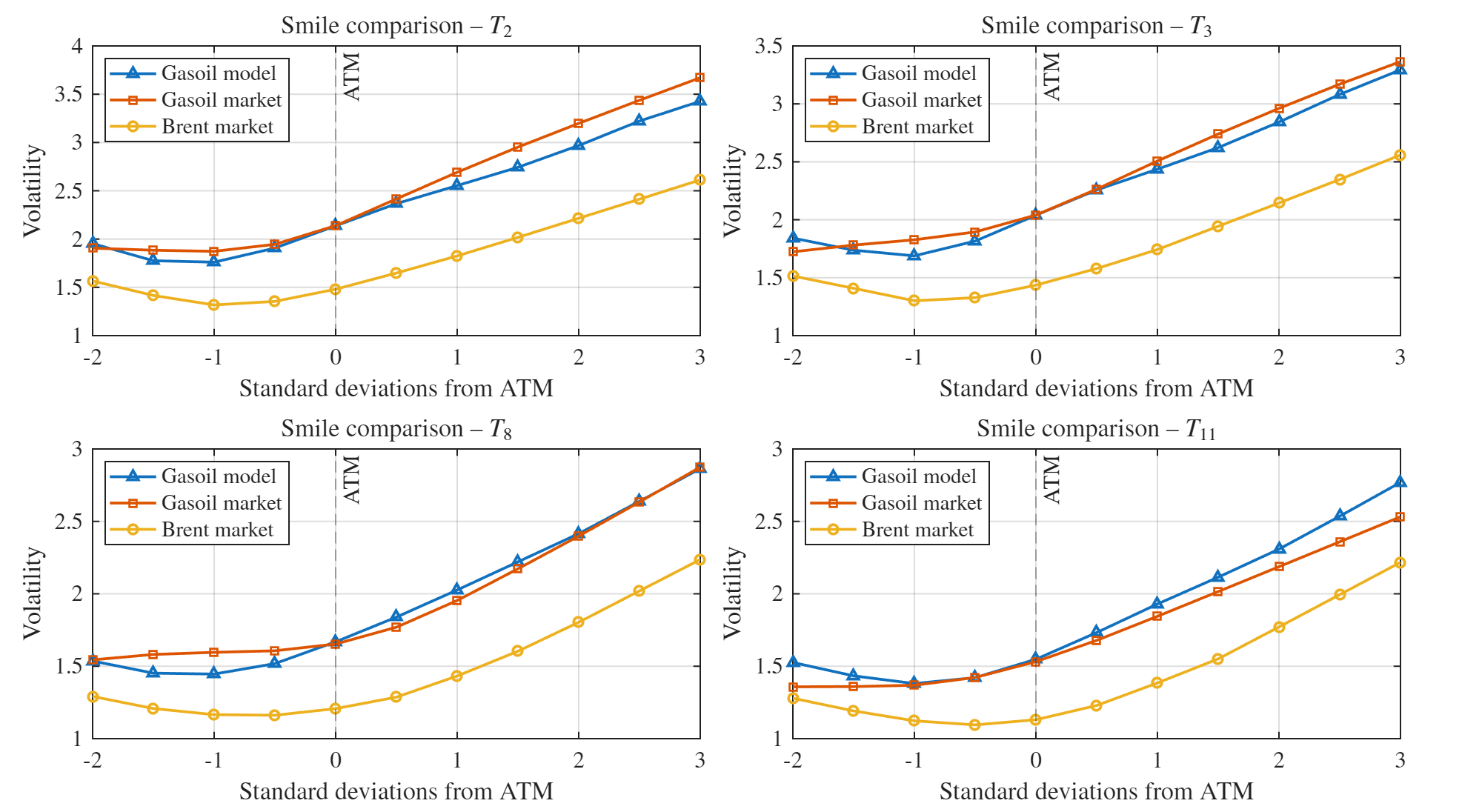}
    \caption{Brent market smiles, Gasoil market smiles and model-implied Gasoil smiles on February 2, 2026. }
    \label{fig:cross_market_smiles_jan}
\end{figure}

\begin{figure}[H]
    \centering
    \includegraphics[width=0.95\textwidth]{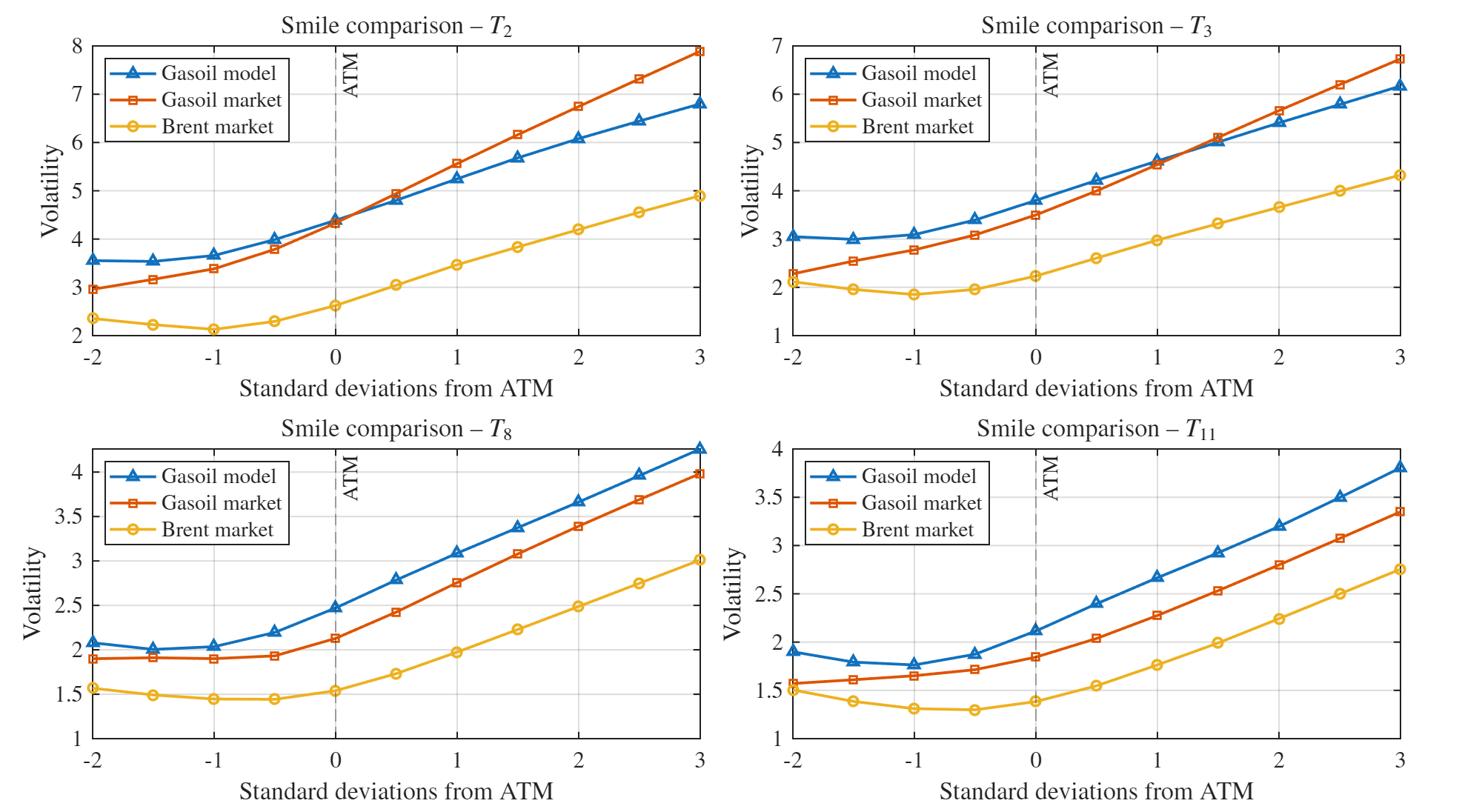}
    \caption{Brent market smiles, Gasoil market smiles and model-implied Gasoil smiles on April 21, 2026.}
    \label{fig:cross_market_smiles_feb}
\end{figure}

\begin{figure}[H]
    \centering
    \includegraphics[width=0.95\textwidth]{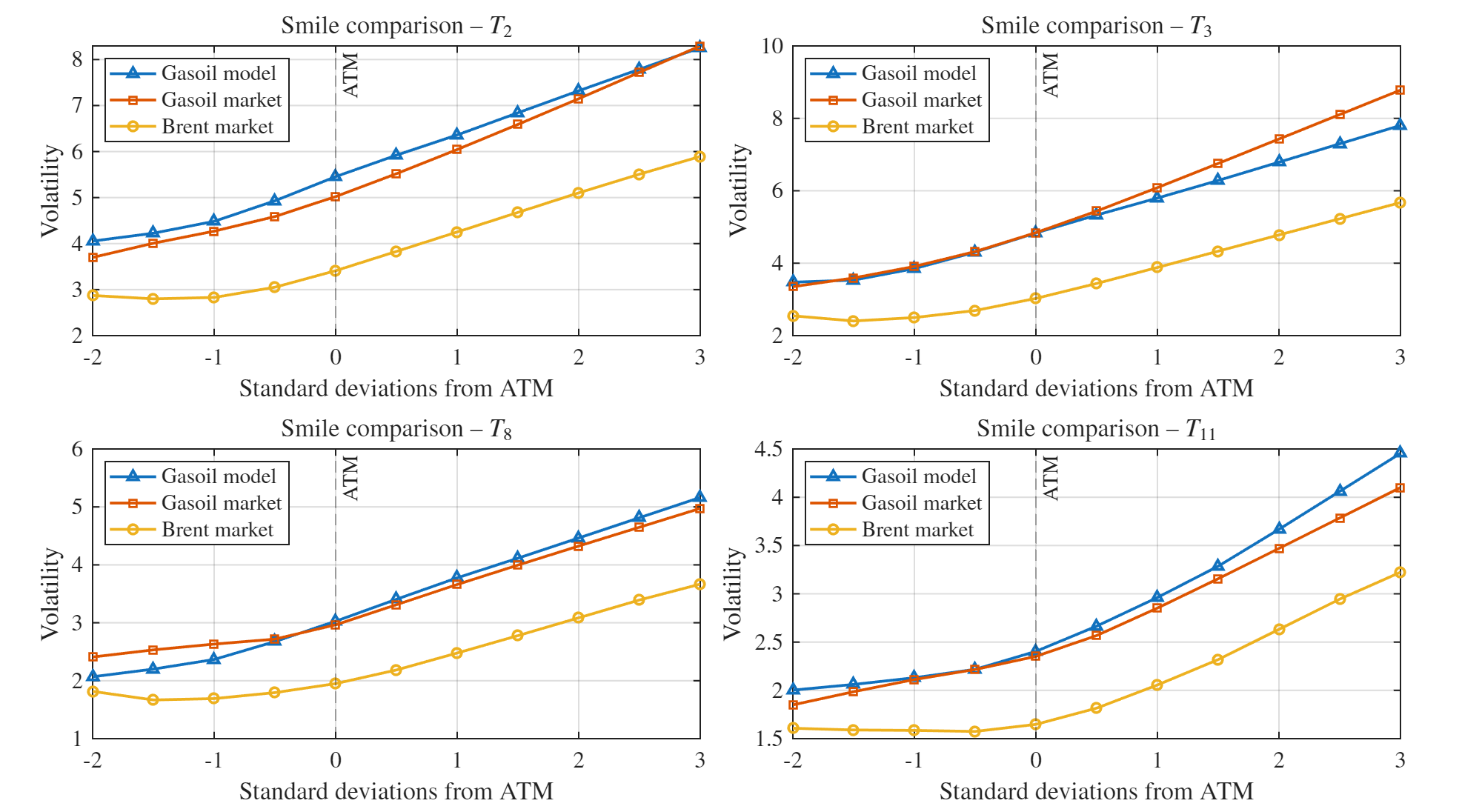}
    \caption{Brent market smiles, Gasoil market smiles and model-implied Gasoil smiles on May 26, 2026. }
    \label{fig:cross_market_smiles_apr}
\end{figure}

The figures show that the Gasoil market volatility is not a direct translation of the Brent market one. Gasoil smiles display indeed a more pronounced right wing, meaning that a direct transfer of the Brent volatility surface would fail to capture the actual shape of the Gasoil smile.

The model-implied Gasoil smiles capture this effect reasonably well. They generate a higher volatility level and a stronger right wing than Brent, in line with the main empirical feature observed in the Gasoil market. This supports the introduction of correction in the Gasoil dynamics, and the one based on the a crack-spread seems to work well.

\subsection{Comparison with the Gasoil market surface}

We now compare the observed Gasoil market surface with the three Gasoil model specifications already introduced: the full model \eqref{sigma1}, the Brent-only benchmark \eqref{sigma2}, and the fixed-spread benchmark \eqref{sigma3}.

Figures \ref{fig:gasoil_model_comparison_jan}--\ref{fig:gasoil_model_comparison_apr} report the comparison for the three selected dates and selected tenors.

\begin{figure}[H]
    \centering
    \includegraphics[width=0.95\textwidth]{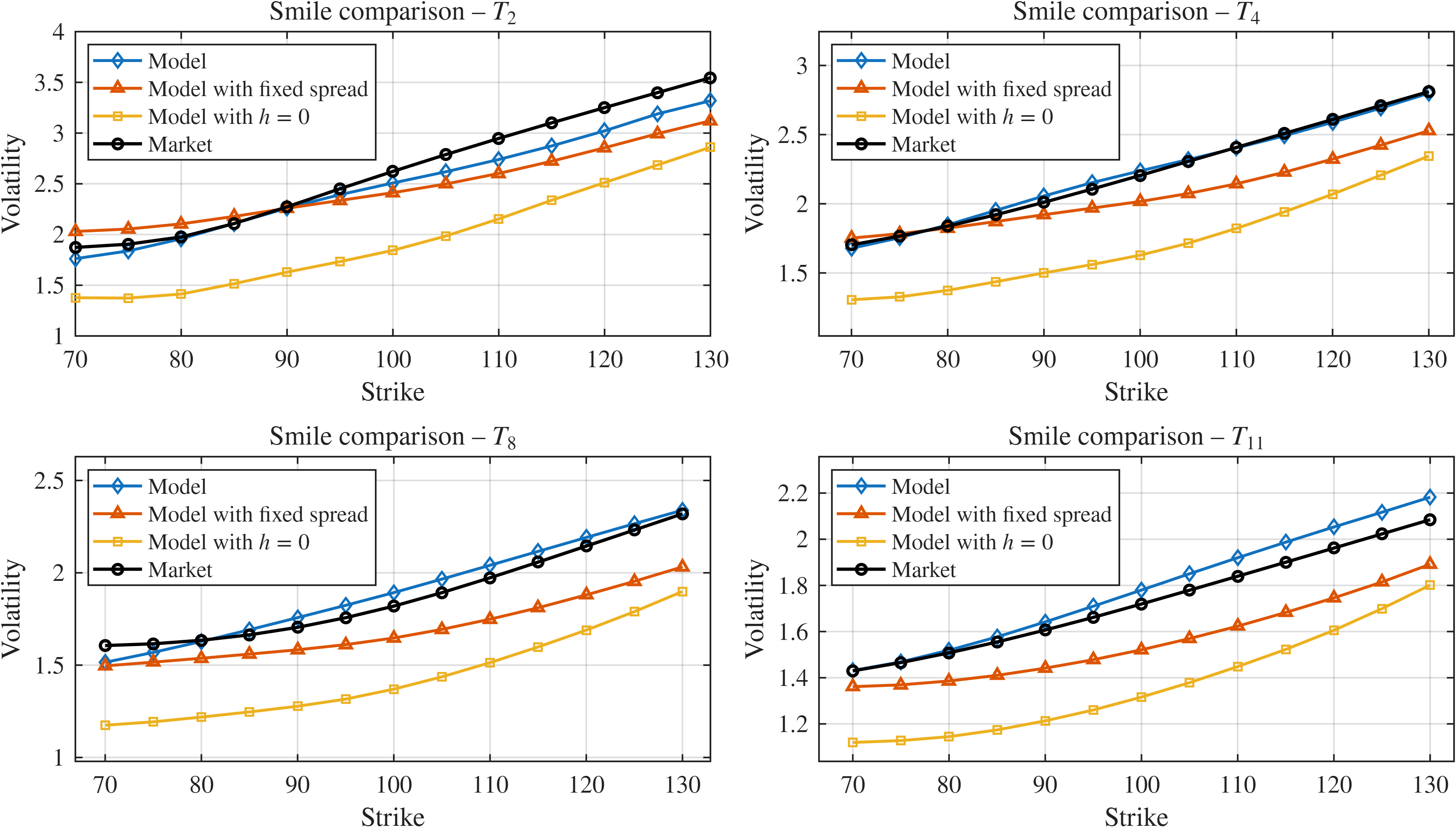}
    \caption{Comparison between the Gasoil market smile and the three model specifications on February 2, 2026.}
    \label{fig:gasoil_model_comparison_jan}
\end{figure}

\begin{figure}[H]
    \centering
    \includegraphics[width=0.95\textwidth]{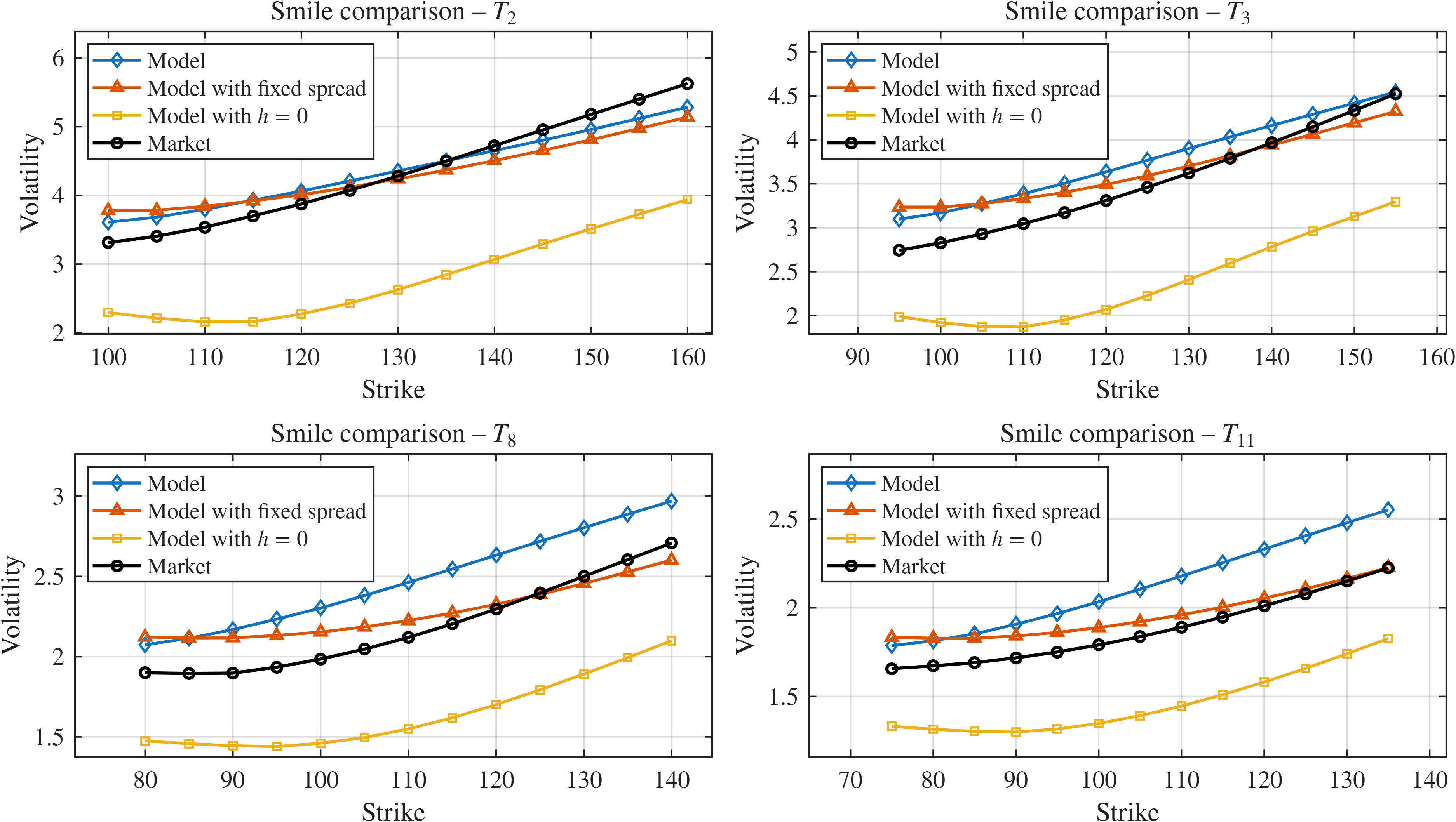}
    \caption{Comparison between the Gasoil market smile and the three model specifications on April 21, 2026.}
    \label{fig:gasoil_model_comparison_feb}
\end{figure}

\begin{figure}[H]
    \centering
    \includegraphics[width=0.95\textwidth]{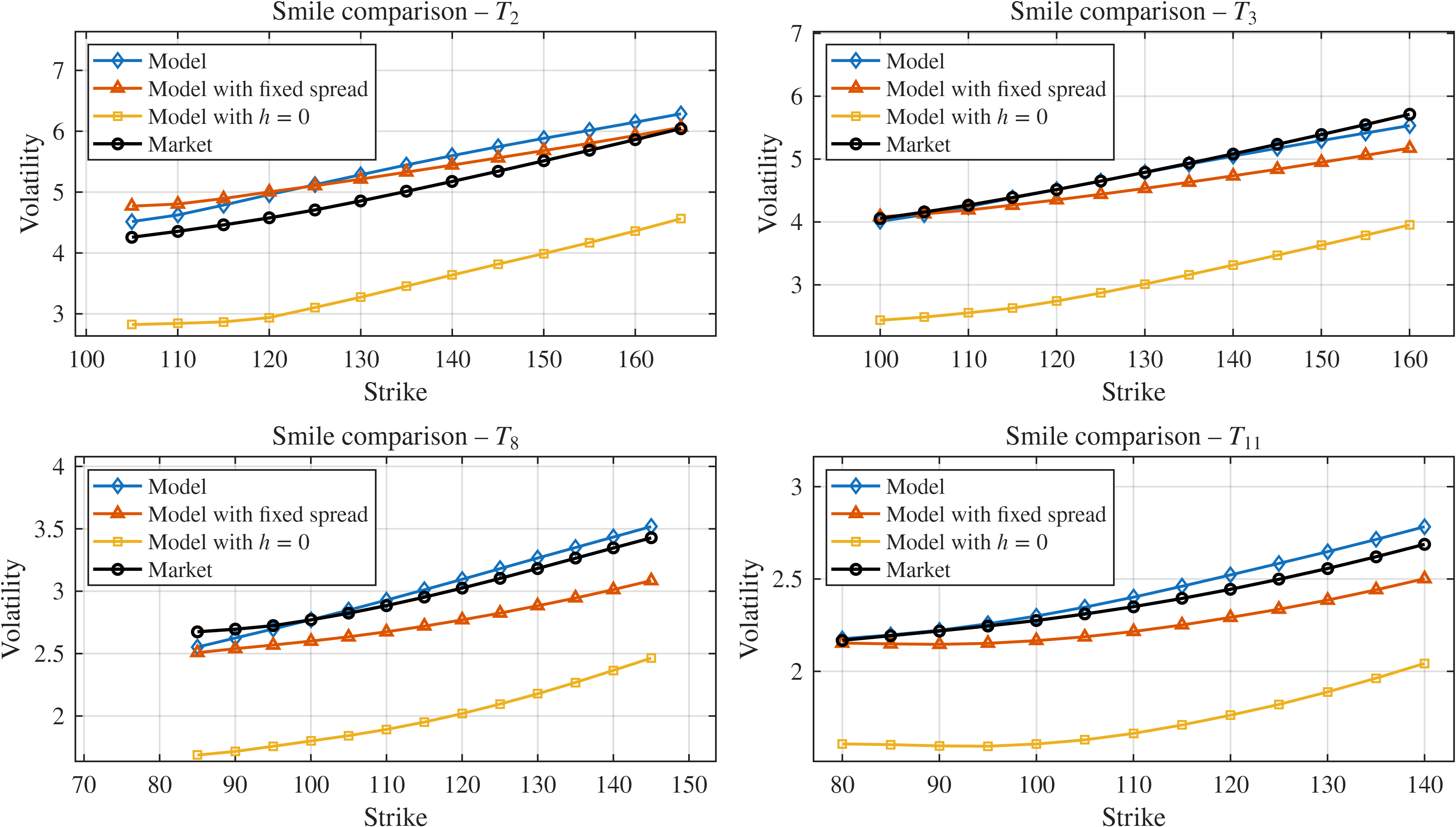}
    \caption{Comparison between the Gasoil market smile and the three model specifications on May 26, 2026.}
    \label{fig:gasoil_model_comparison_apr}
\end{figure}

The comparison shows that the Brent-only benchmark $h=0$ is too restrictive: it stays close to the Brent volatility structure and does not reproduce the Gasoil smile accurately. The fixed-spread benchmark improves on this direct transfer by adding a correction linked to the current crack-spread level, but this correction is static and therefore cannot always reproduce the full curvature of the smile.

The full model is more flexible because it adjusts both the level and the shape of the Gasoil smile, including the stronger right-wing behaviour observed in the market. 
\begin{table}[H]
\centering
\begin{tabular}{l l c c c}
\toprule
Date & Specification & RMSE & MAE & Relative error \\
\midrule
February 2, 2026
& Full model         & 0.1132 & 0.0760 & 3.28\% \\
& Brent-only $h=0$   & 0.5215 & 0.5002 & 23.76\% \\
& Fixed spread       & 0.2310 & 0.1983 & 8.98\% \\
\midrule
April 21, 2026
& Full model         & 0.2959 & 0.2735 & 10.97\% \\
& Brent-only $h=0$   & 0.9674 & 0.8479 & 27.79\% \\
& Fixed spread       & 0.2293 & 0.1963 & 7.91\% \\
\midrule
May 26, 2026
& Full model         & 0.2174 & 0.1630 & 3.87\% \\
& Brent-only $h=0$   & 1.3472 & 1.2872 & 33.67\% \\
& Fixed spread       & 0.3696 & 0.3260 & 8.33\% \\
\bottomrule
\end{tabular}
\caption{Global error metrics against the Gasoil market surface for the three model specifications.}
\label{tab:gasoil_model_errors}
\end{table}

Table \ref{tab:gasoil_model_errors} summarizes the global errors  for the three specifications and Figures \ref{fig:relative_error_jan}--\ref{fig:relative_error_apr} show the relative error by tenor, in order to show where each specification performs better or worse along the term structure.

\begin{figure}[H]
    \centering
    \includegraphics[width=0.8\textwidth]{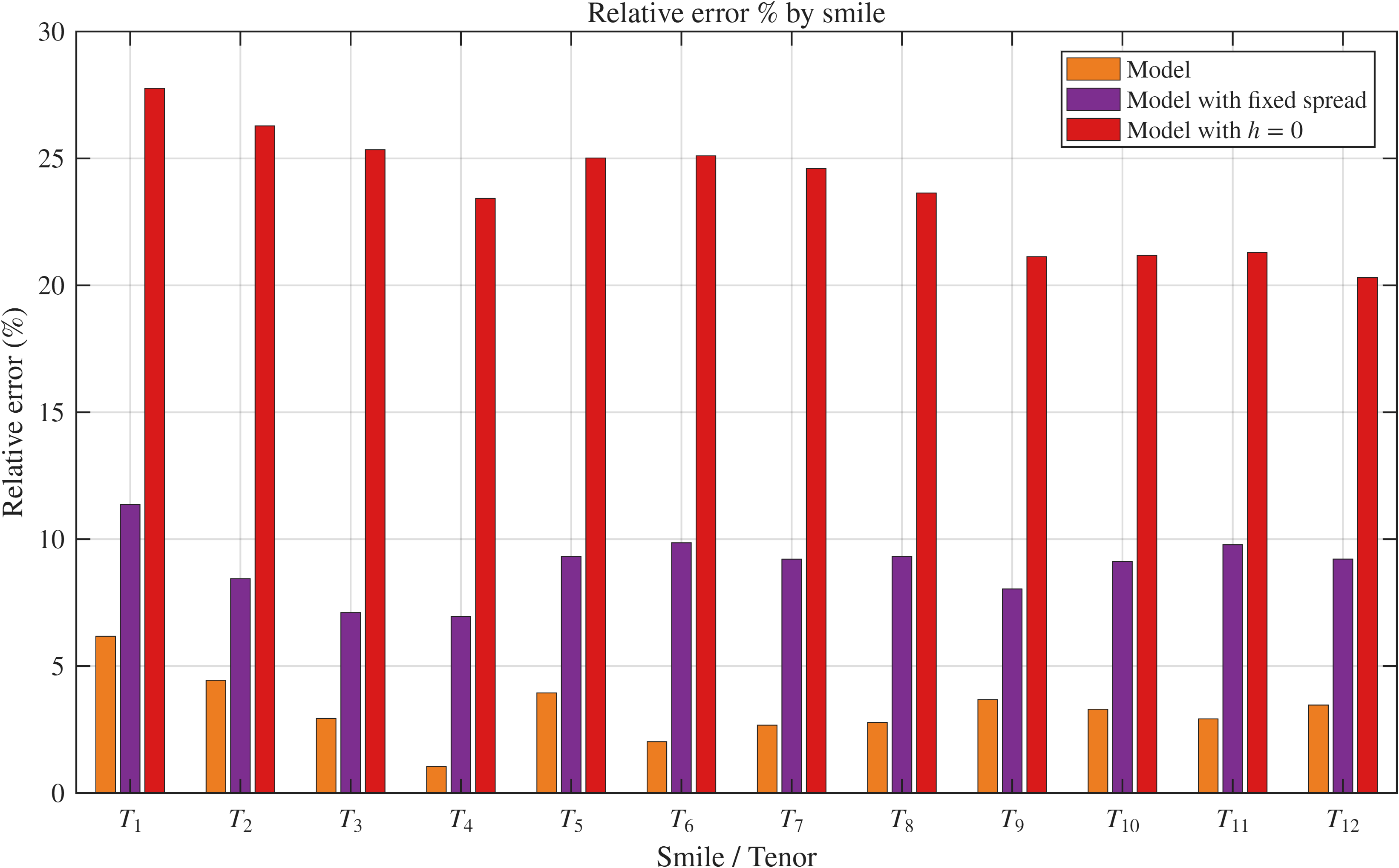}
    \caption{Relative error by tenor for the three Gasoil model specifications on  February 2, 2026}   \label{fig:relative_error_jan}
\end{figure}

\begin{figure}[H]
    \centering
    \includegraphics[width=0.8\textwidth]{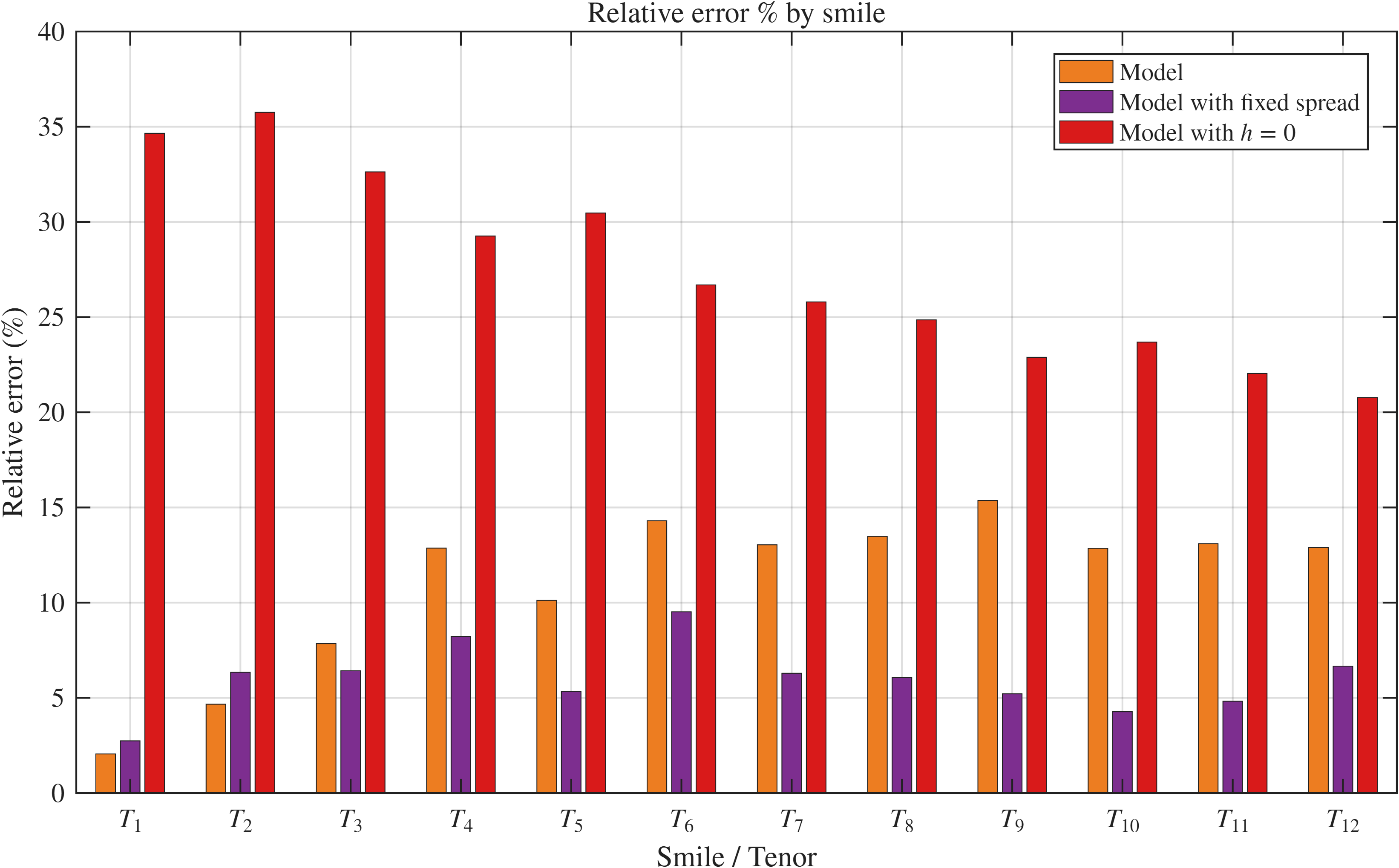}
    \caption{Relative error by tenor for the three Gasoil model specifications on April 21, 2026.}
    \label{fig:relative_error_feb}
\end{figure}

\begin{figure}[H]
    \centering
    \includegraphics[width=0.8\textwidth]{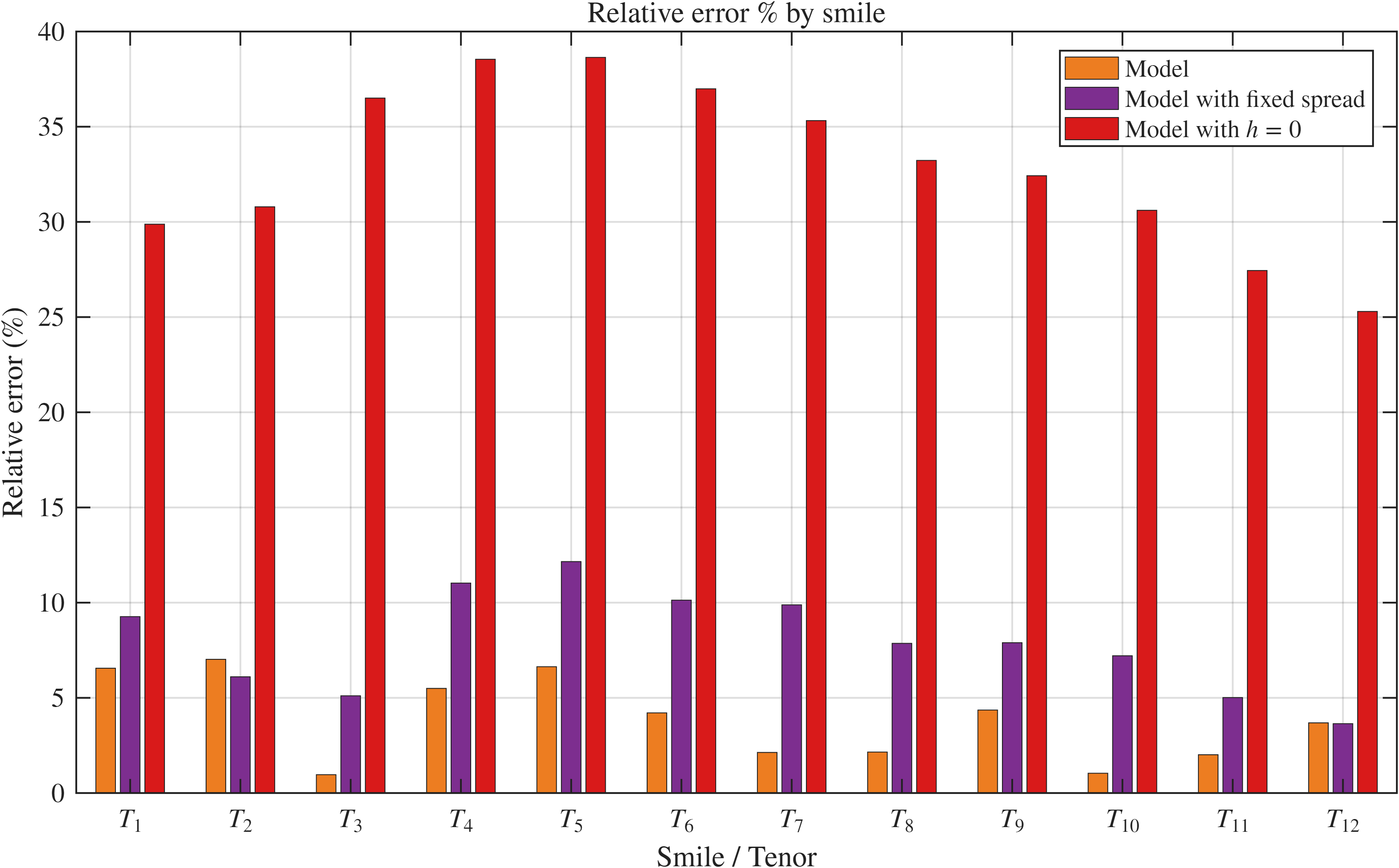}
    \caption{Relative error by tenor for the three Gasoil model specifications on May 26, 2026.}
    \label{fig:relative_error_apr}
\end{figure}

The global and tenor-level errors give the same conclusion. The Brent-only benchmark $h=0$ produces the largest errors over almost the whole term structure, confirming that Brent volatility cannot be transferred directly to Gasoil.

Both corrected specifications improve the fit substantially. On February 2 and May 26, the full model achieves the lowest global errors and generally performs best across tenors. On April 21, the fixed-spread benchmark is slightly more accurate in global terms and outperforms the full model for several maturities. This suggests that, on that date, the interpolated crack-spread correction already provides a very effective level adjustment.

Overall, the main improvement comes from adding a Gasoil-specific correction. The full model provides the most flexible mechanism to adjust both the level and the shape of the Gasoil smile, while the fixed-spread benchmark confirms that the crack-spread component contains relevant information for the Gasoil volatility surface.

Finally, we remark that the Gasoil option surface is not used as an input in the model calibration. It is used only ex post, as a market benchmark to evaluate the reconstructed surface. The model-implied Gasoil surface is obtained from liquid Brent option quotes, which determine the Brent volatility structure, and from Gasoil futures prices, which enter through the crack-spread correction. When considering this fact, we can say that the fit between model and marked implied Gasoil implied volatilities is remarkably good.

\printbibliography

\end{document}